\newtheorem{theorem}{Theorem}
\newtheorem{lemma}[theorem]{Lemma}
\theoremstyle{definition}
\newtheorem{claim}[theorem]{Claim}
\newtheorem{fact}[theorem]{Fact}
\newenvironment{proofof}[1]{\noindent{\bf Proof of #1:}}{\qed\\}
\newcommand{\tinyspace}{\mspace{1mu}}
\newcommand{\norm}[1]{\left\lVert\tinyspace#1\tinyspace\right\rVert}
\newcommand{\defeq}{\stackrel{\smash{\text{\tiny def}}}{=}}
\newcommand{\tr}{\operatorname{Tr}}
\def\({\left(}
\def\){\right)}
\def\I{\mathbb{1}}
\def\complex{\mathbb{C}}
\def\O{\mathcal{O}}
\def\ve{{\varepsilon}}
\def\poly{\mathrm{poly}}
\def\polylog{\mathrm{polylog}}
\def\opt{\mathrm{opt}}
\def\thr{\mathsf{thr}}
\def\Diag{\mathrm{Diag}}
\newcommand{\suppress}[1]{}
\title{A Parallel Approximation Algorithm for Positive Semidefinite Programming}
\author{
Rahul Jain\thanks{
Centre for Quantum Technologies and Department of Computer Science, 
National University of Singapore,  Block S15, 3 Science Drive~2,
Singapore 11754.
Email: \texttt{rahul@comp.nus.edu.sg}.
} \\
National U.\ Singapore
\and
Penghui Yao\thanks{
Centre for Quantum Technologies and Department of Computer Science, 
National University of Singapore,  Block S15, 3 Science Drive~2,
Singapore 11754.
Email: \texttt{pyao@nus.edu.sg}.
} \\
National U.\ Singapore
}
\begin{document}

\maketitle 

\abstract{Positive semidefinite programs are an important subclass of semidefinite programs in which all matrices involved in the specification of the problem are positive semidefinite and all scalars involved are non-negative. We present a parallel algorithm, which given an instance of a positive semidefinite program of size $N$ and an approximation factor $\ve > 0 $, runs in (parallel) time $\poly(\frac{1}{\ve}) \cdot \polylog(N)$, using $\poly(N)$ processors, and outputs a value which is within multiplicative factor of $(1+\ve)$ to the optimal. Our result generalizes analogous result  of Luby and Nisan~\cite{LubyN93} for positive linear programs and our algorithm is inspired by the algorithm of~\cite{LubyN93}.}

\thispagestyle{empty} 

\newpage 

\setcounter{page}{1}

\section{Introduction}
Fast parallel algorithms for approximating optimum solutions to different subclasses of semidefinite programs have been studied in several recent works (e.g.~\cite{AroraHK05, AroraK07, Kale07, JainW09, JainUW09, JainJUW10}) leading to many interesting applications including the celebrated result $\mathrm{QIP=PSPACE}$ \cite{JainJUW10}. However for each of the algorithms used for example in~\cite{JainW09, JainUW09, JainJUW10}, in order to produce a $(1+\ve)$ approximation of the optimal value for a given semidefinite program of size $N$, in the corresponding subclass that they considered, the (parallel) running time was $\polylog(N) \cdot \poly(\kappa) \cdot \poly(\frac{1}{\ve})$, where $\kappa$ was a 'width' parameter that depended on the input semidefinite program (and was defined differently for each of the algorithms). For the specific instances of the semidefinite programs arising out of the applications considered in~\cite{JainW09, JainUW09, JainJUW10}, it was separately argued that the corresponding 'width parameter' $\kappa$ is at most $\polylog(N)$ and therefore the running time remained  $\polylog(N)$ (for constant $\ve$). It was therefore desirable to remove the polynomial dependence on the 'width' parameter and obtain a truly $\polylog$ running time algorithm, for a reasonably large subclass of semidefinite programs. 

In this work we consider the class of positive semidefinite programs. A positive semidefinite program can be expressed in the following standard form (we use symbols $\geq, \leq$ to also represent L\"owner order).
\begin{center}
  \begin{minipage}{2in}\vspace{-10mm}
    \centerline{\underline{Primal problem ${P}$}}\vspace{-4mm}
    \begin{align*}
      \text{minimize:}\quad & \tr {C} X\\
      \text{subject to:}\quad & \forall i\in  [m] : \tr {A}_i X \geq {b}_i,\\
      & X \geq 0.
    \end{align*}
  \end{minipage}
  \hspace*{25mm}
  \begin{minipage}{2in}
    \centerline{\underline{Dual problem ${D}$}}\vspace{-4mm}
    \begin{align*}
      \text{maximize:}\quad & \sum_{i=1}^m {b}_i y_i \\
      \text{subject to:}\quad & \sum_{i=1}^m y_i \cdot {A}_i  \leq {C},\\
      & \forall i \in [m] : y_i \geq 0.
    \end{align*}
  \end{minipage}
\end{center}
Here $C, A_1, \ldots, A_m$ are $n \times n$ positive semidefinite matrices and $b_1, \ldots, b_m$ are non-negative reals (in a general semidefinite program $C, A_1, \ldots, A_m$ are Hermitian and $b_1, \ldots, b_m$ are reals).  Let us assume that the conditions for strong duality are satisfied and the optimum value for ${P}$, denoted $\opt({P})$, equals the optimum value for ${D}$, denoted $\opt({D})$. We present an algorithm, which given as input, $(C, A_1, \ldots, A_m, b_1, \ldots, b_m)$, and an error parameter $\ve > 0$, outputs a $(1 + \ve)$ approximation to the optimum value of the program, and has running time $\polylog(n) \cdot \polylog(m) \cdot \poly(\frac{1}{\ve})$. As can be noted, there is no polynomial dependence on any 'width' parameter on the running time of our algorithm. The classes of semidefinite programs used in~\cite{JainW09, JainUW09} are a subclass of positive semidefinite programs and hence our algorithm can directly be applied to the programs in them without needing any other argument about the 'width' being $\polylog$ in the size of the program (to obtain an algorithm running in $\polylog$ time).

Our algorithm is inspired by the algorithm used by Luby and Nisan~\cite{LubyN93} to solve positive linear programs. Positive linear programs can be considered as a special case of positive semidefinite programs in which the matrices used in the description of the program are all pairwise commuting.  Our algorithm (and the algorithm in~\cite{LubyN93}) is based on the 'multiplicative weights update' (MWU) method. This is a powerful technique for 'experts learning' and finds its origins in various fields including learning theory, game theory, and optimization. The algorithms used in~\cite{AroraHK05, AroraK07, Kale07, JainW09, JainUW09, JainJUW10} are based on its matrix variant the 'matrix multiplicative weights update' method. The algorithm of Luby and Nisan~\cite{LubyN93} proceeds in phases, where in each phase the large eigenvalues of $\sum_{i=1}^m y^t_i A_i$ ($y^t_i$s represent the candidate dual variables at time $t$) are sought to be brought below a threshold determined for that phase. The primal variable at time step $t$ is chosen to be the projection onto the large eigenvalues (above the threshold) eigenspace of $\sum_{i=1}^m y^t_i A_i$. Using the sum of the primal variables generated so far, the dual variables are updated using the MWU method. A suitable scaling parameter $\lambda_t$ is chosen during this update, which is small enough so that the good properties needed in the analysis of MWU are preserved and at the same time is large enough so that there is reasonable progress in bringing down the large eigenvalues. 

Due to the non-commutative nature of the matrices involved in our case, our algorithm primarily deviates from that of~\cite{LubyN93} in how the threshold is determined inside each phase. The problem that is faced is roughly as follows. Since $A_i$'s could be non-commuting, when $y^t_i$s are scaled down, the sum of the large eigenvalues of $\sum_{i=1}^m y^t_i A_i$ may not come down and this scaling may just move the large eigenvalues eigenspace. Therefore a suitable extra condition needs to be ensured while choosing the threshold.  Due to this, our analysis also primarily deviates from~\cite{LubyN93} in bounding the number of time steps required in any phase and is significantly more involved. The analysis requires us to study the relationship between the large eigenvalues eigenspaces before and after scaling (say $W_1$ and $W_2$). For this purpose we consider the decomposition of the underlying space into one and two-dimensional subspaces which are invariant under the actions of both $\Pi_1$ and $\Pi_2$ (projections onto $W_1$ and $W_2$ respectively) and this helps the analysis significantly. Such decomposition has been quite useful in earlier works as well for example in quantum walk~\cite{Szegedy04,Regev06,AmbainisCRSZ07} and quantum complexity theory~\cite{MarriottW05,DanielWZ09}.

We present the algorithm in the next section and its analysis, both optimality and the running time, in the subsequent section. Due to space constraints we move some proofs to the Appendix.

\suppress{Lemma \ref{lem:projectordecomposition} is an important result
related to quantum walk and quantum complexity theory
\cite{sz04,reg06,crsz07,nwz09,rei09} (This is what I wanna emphasize
in the introduction)}

\section{Algorithm}
Given the positive semidefinite program $({P},{D})$ as above, we first show in Appendix~\ref{sec:transform} that without loss of generality $(P, D)$ can be in the following special form.

\begin{center}
  \begin{minipage}{2in}\vspace{-10mm}
    \centerline{\underline{Special form Primal problem $P$}}\vspace{-4mm}
    \begin{align*}
      \text{minimize:}\quad & \tr X\\
      \text{subject to:}\quad & \forall i\in  [m] : \tr A_i X \geq 1,\\
      & X \geq 0.
    \end{align*}
  \end{minipage}
  \hspace*{25mm}
  \begin{minipage}{2in}
    \centerline{\underline{Special form Dual problem $D$}}\vspace{-4mm}
    \begin{align*}
      \text{maximize:}\quad & \sum_{i=1}^m y_i \\
      \text{subject to:}\quad & \sum_{i=1}^m y_i \cdot A_i  \leq \I,\\
      & \forall i \in [m] : y_i \geq 0.
    \end{align*}
  \end{minipage}
\end{center}
Here $A_1, \ldots, A_m$ are $n \times n$ positive semidefinite matrices and $\I$ represents the identity matrix. Furthermore, for all $i$, norm of $A_i$, denoted $\norm{A_i}$, is at most  $1$ and the minimum non-zero eigenvalue of $A_i$ is at least $\frac{1}{\gamma}$ where $\gamma = \frac{m^2}{\ve^2}$.

In order to compactly describe the algorithm, and also the subsequent analysis, we introduce some notation. Let $Y = \Diag(y_1, \ldots , y_m)$ ($m \times m$ diagonal matrix
with $Y(i,i) = y_i$ for $i \in [m]$). Let $\Phi$ be the map (from $n \times n$ positive semidefinite matrices to $m \times m$  positive semidefinite diagonal matrices) defined
by $\Phi(X) = \Diag(\tr A_1 X, \ldots ,\tr A_m X )$. Then its
adjoint map $\Phi^*$ acts as $\Phi^*(Y) = \sum_{i=1}^m Y(i,i) \cdot
A_i$ (for all diagonal matrices $Y \geq 0$).  We let $\I$ represent the identity matrix (in the appropriate dimensions clear from the context). For Hermitian matrix $B$ and real number $l$, let $N_l(B)$ represent the sum of eigenvalues of $B$ which are at least $l$. 
The algorithm is mentioned in Figure~\ref{fig:alg}. 

\begin{figure}[!ht]

\noindent\hrulefill

{
\small

\noindent {\bf Input :} Positive semidefinite matrices $A_1, \ldots, A_m$ and error parameter $\ve >0$.

\medskip

\noindent {\bf Output :} $X^*$ feasible for $P$ and $Y^*$ feasible for $D$. 

\begin{enumerate}

\item Let $\ve_0=\frac{\ve^2}{\ln^2 n}$, $t=0, X_0=0$. Let $k_s$ be the
smallest positive number such that
$(1+\ve_0)^{k_s}\leq\|\Phi^*(\I)\|<(1+\ve_0)^{k_s+1}$. Let
$k=k_s$.

\item Let $Y_t  = \exp(-\Phi(X_{t}))$.

\item If $\tr Y_t>\frac{1}{m^{1/\ve}}$, do

\begin{enumerate}

\item If $\|\Phi^*(Y_t)\|<(1+\ve_0)^k$, then set
$k\leftarrow k-1$ and repeat this step.

\item Set $\thr'=k$.

\item If
$$N_{(1+\ve_0)^{\thr'-1}}(\Phi^*(Y_t))\geq(1+\frac{2}{5}\ve)N_{(1+\ve_0)^{\thr'}}(\Phi^*(Y_t)).$$
then $\thr'\leftarrow  \thr'-1$ and repeat this step. Else set $\thr = \thr'$.

\item Let $\Pi_t$ be the projector on the eigenspace of
$\Phi^*(Y_t)$ with eigenvalues at least $(1+\ve_0)^{\thr}$. For
$\lambda>0$, let $P^{\geq}_{\lambda}$ be the projection onto
eigenspace of $\Phi(\lambda\Pi_t)$ with eigenvalues at least
$2\sqrt{\ve}$. Let $P^{\leq}_{\lambda}$ be the projection onto
eigenspace of $\Phi(\lambda\Pi_t)$ with eigenvalues at most
$2\sqrt{\ve}$. Find $\lambda_t$  such that

\begin{itemize}

\item[1.] $\tr (P^\geq_{\lambda_t} Y_t P^\geq_{\lambda_t}) \Phi(\Pi_t) \geq \sqrt{\ve} \tr  Y_t  \Phi(\Pi_t)$ and,

\item[2.] $\tr (P^\leq_{\lambda_t} Y_t P^\leq_{\lambda_t}) \Phi(\Pi_t) \geq (1 - \sqrt{\ve}) \tr  Y_t  \Phi(\Pi_t)$ as follows.

\end{itemize}

\begin{enumerate}
\item Sort $\{\tr A_i \Pi_t\}_{i=1}^m$ in non-increasing order. Suppose $\tr A_{j_1}\Pi_t\geq\tr A_{j_2}\Pi_t\geq\cdots\geq\tr A_{j_m}\Pi_t.$
\item Find index $r\in [m]$ satisfying
$$\sum_{k=1}^r y_{j_k} \tr A_{j_k}\Pi_t\geq\sqrt{\ve}\sum_{k=1}^m y_{j_k}\tr A_{j_k}\Pi_t, \mbox{ and } $$ 
$$\sum_{k=r}^m y_{j_k} \tr A_{j_k}\Pi_t\geq(1-\sqrt{\ve})\sum_{k=1}^m y_{j_k} \tr A_{j_k}\Pi_t.$$
\item Let $\lambda_t= \frac{2\sqrt{\ve}}{\tr A_{j_r}\Pi_t}$. 
\end{enumerate}

\item Let $X_{t+1} = X_t + \lambda_t \Pi_t$. Set $t \leftarrow t+1$ and go to Step 2.

\end{enumerate}

\item Let $t_f = t$, $k_f = k$. Let $\alpha$ be the minimum eigenvalue of $\Phi(X_{t_f})$. Output $X^* = X_{t_f}/\alpha$.
\item Let $t'$ be such that $\tr Y_{t'} / \norm{\Phi^*(Y_{t'})}$ is the maximum among all time steps. Output $Y^* = Y_{t'}/\norm{\Phi^*(Y_{t'})}$.

\end{enumerate}

}
\noindent\hrulefill 

\caption{Algorithm} \label{fig:alg}

\end{figure}

%

\section{Analysis}
For all of this section, let $\ve_1 = \frac{3\ve}{\ln n}$. In the following we assume that $n$ is sufficiently large and $\ve$ is sufficiently small. 

\subsection{Optimality}
In this section we present the analysis assuming that all the operations performed by the algorithm are perfect. We claim, without going into further details, that similar analysis can be performed while taking into account the accuracy loss due to the actual operations of the algorithm in the limited running time.

We start with following claims.
\begin{claim}\label{claim:lambda}
For all $t \leq t_f$, $\lambda_t$ satisfies the conditions $1.$ and $2.$ in Step (3d) in the Algorithm.
\end{claim}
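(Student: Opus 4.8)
The goal is to show that the $\lambda_t$ defined in Step (3d), namely $\lambda_t = \frac{2\sqrt{\ve}}{\tr A_{j_r}\Pi_t}$ where $r$ is chosen via the sorting in sub-steps (i)--(iii), indeed satisfies conditions 1. and 2. of Step (3d). The plan is to unpack the spectral definitions of $P^{\geq}_{\lambda_t}$ and $P^{\leq}_{\lambda_t}$ and show that they reduce exactly to the combinatorial (diagonal) splitting performed in sub-steps (i)--(iii). The crucial observation is that $\Phi(\lambda\Pi_t) = \Diag(\lambda \tr A_1\Pi_t, \ldots, \lambda\tr A_m\Pi_t)$ is a \emph{diagonal} matrix, so its eigenvalues are literally the entries $\lambda \tr A_i \Pi_t$, and its eigenspaces are spanned by standard basis vectors. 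Hence $P^{\geq}_{\lambda}$ is the diagonal projector onto exactly those coordinates $i$ with $\lambda\tr A_i\Pi_t \geq 2\sqrt{\ve}$, and similarly $P^{\leq}_{\lambda}$ onto coordinates with $\lambda\tr A_i\Pi_t \leq 2\sqrt{\ve}$.

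First I would write $\tr A_i\Pi_t = \Phi(\Pi_t)(i,i)$, so that $\tr Y_t \Phi(\Pi_t) = \sum_{i=1}^m y_i \tr A_i\Pi_t$ (using that $Y_t$ is diagonal with entries $y_i$, which are nonnegative). Next, since the $A_{j_k}$ are sorted so that $\tr A_{j_1}\Pi_t \geq \cdots \geq \tr A_{j_m}\Pi_t$, the choice $\lambda_t = \frac{2\sqrt{\ve}}{\tr A_{j_r}\Pi_t}$ makes $\lambda_t \tr A_{j_k}\Pi_t \geq 2\sqrt{\ve}$ precisely for $k \leq r$ (those with $\tr A_{j_k}\Pi_t \geq \tr A_{j_r}\Pi_t$) and $\lambda_t \tr A_{j_k}\Pi_t \leq 2\sqrt{\ve}$ for $k \geq r$. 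Therefore $P^{\geq}_{\lambda_t}$ projects onto $\mathrm{span}\{e_{j_1},\ldots,e_{j_r}\}$ and $P^{\leq}_{\lambda_t}$ onto $\mathrm{span}\{e_{j_r},\ldots,e_{j_m}\}$. Consequently
\[
\tr (P^{\geq}_{\lambda_t} Y_t P^{\geq}_{\lambda_t})\Phi(\Pi_t) = \sum_{k=1}^r y_{j_k}\tr A_{j_k}\Pi_t,
\qquad
\tr (P^{\leq}_{\lambda_t} Y_t P^{\leq}_{\lambda_t})\Phi(\Pi_t) = \sum_{k=r}^m y_{j_k}\tr A_{j_k}\Pi_t,
\]
where I use that all matrices in sight ($Y_t$, $\Phi(\Pi_t)$, the projectors) are diagonal so the traces are just sums over the selected coordinates. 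Conditions 1. and 2. of Step (3d) then become exactly the two inequalities required of $r$ in sub-step (ii), so the claim follows immediately.

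Two small points need checking to make the argument airtight. First, one must confirm that an index $r$ with both properties in sub-step (ii) exists: letting $f(r) = \sum_{k=1}^r y_{j_k}\tr A_{j_k}\Pi_t$ and $S = f(m)$, the first condition asks $f(r) \geq \sqrt{\ve}\,S$ and the second asks $S - f(r-1) \geq (1-\sqrt{\ve})S$, i.e. $f(r-1) \leq \sqrt{\ve}\,S$; since $f$ is nondecreasing with $f(0)=0$ and $f(m)=S$, taking $r$ to be the smallest index with $f(r)\geq\sqrt{\ve}\,S$ gives $f(r-1) < \sqrt{\ve}\,S$, so both hold (assuming $S > 0$; if $S=0$ the conditions are vacuous). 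Second, I should note the boundary-coordinate subtlety: coordinates with $\lambda_t\tr A_{j_k}\Pi_t$ exactly equal to $2\sqrt{\ve}$ (in particular $k=r$, and any ties) may be assigned to either projector, but the definitions in Step (3d) put such coordinates in \emph{both} $P^{\geq}$ and $P^{\leq}$, which is consistent with the overlapping ranges $\{1,\ldots,r\}$ and $\{r,\ldots,m\}$ in sub-step (ii) — so no conflict arises. I do not anticipate a serious obstacle here; the only mild care is in handling these ties and the existence of $r$, and the essential content is simply that diagonalness of $\Phi(\Pi_t)$, $Y_t$, and the relevant projectors collapses the spectral conditions to the explicit sorting procedure.
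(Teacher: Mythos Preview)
Your proposal is correct and is precisely the verification the paper leaves implicit: the paper's entire proof of this claim is the single phrase ``Easily verified.'' You have accurately identified that $\Phi(\lambda\Pi_t)$ and $Y_t$ are diagonal, so the spectral projectors $P^{\geq}_{\lambda_t}$, $P^{\leq}_{\lambda_t}$ become coordinate selectors and conditions 1.\ and 2.\ collapse to the two inequalities on $r$ in sub-step (ii); your handling of existence of $r$ and of ties at the threshold is also sound (and in fact $S=\tr Y_t\Phi(\Pi_t)=\tr\Phi^*(Y_t)\Pi_t>0$ always holds here, since $\Pi_t$ projects onto an eigenspace of $\Phi^*(Y_t)$ with strictly positive eigenvalues, so the degenerate case never arises).
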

\begin{proof}
Easily verified.
\end{proof}
\begin{claim}\label{claim:alpha}
$\alpha > 0$.
\end{claim}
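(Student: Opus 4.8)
The plan is simply to unwind the definitions. Since each $A_i$ and $X_{t_f}$ are positive semidefinite, the matrix $\Phi(X_{t_f})=\Diag(\tr A_1 X_{t_f},\ldots,\tr A_m X_{t_f})$ is a nonnegative diagonal matrix, so $\alpha$, its minimum eigenvalue, equals $\min_{i\in[m]}\tr A_i X_{t_f}$. Hence it suffices to prove that $\tr A_i X_{t_f}>0$ for every $i\in[m]$.

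To do this I would invoke the exit condition of the main loop. The algorithm reaches Step~4, where $t_f$ and $\alpha$ are set, exactly when the guard of Step~3 fails, i.e.\ when $\tr Y_{t_f}\le \frac{1}{m^{1/\ve}}$. Now $Y_{t_f}=\exp(-\Phi(X_{t_f}))$ is diagonal with $i$-th diagonal entry $\exp(-\tr A_i X_{t_f})>0$, so $\tr Y_{t_f}=\sum_{i=1}^m\exp(-\tr A_i X_{t_f})$, and therefore for each $i$ we have $\exp(-\tr A_i X_{t_f})\le \tr Y_{t_f}\le \frac{1}{m^{1/\ve}}<1$ (the final strict inequality using $m\ge 2$). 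Taking logarithms yields $\tr A_i X_{t_f}\ge \frac{\ln m}{\ve}>0$, and so $\alpha=\min_{i\in[m]}\tr A_i X_{t_f}>0$.

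There is essentially no obstacle here: the claim is little more than a rephrasing of the loop's termination criterion together with the fact that $\Phi(X_{t_f})$ is diagonal. The only small points worth making explicit are that the diagonal form of $\Phi(X_{t_f})$ identifies its eigenvalues with the numbers $\tr A_i X_{t_f}$, and that $\frac{1}{m^{1/\ve}}<1$, which is what converts ``a sum of positive terms is at most $\frac{1}{m^{1/\ve}}$'' into ``each exponent is strictly negative''. (Implicit throughout is that the algorithm does terminate, so that Step~4 is reached; this is established in the running-time analysis.)
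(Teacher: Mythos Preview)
Your proof is correct and follows the same idea as the paper's, just spelled out in more detail. The paper compresses the argument into one line, $\tfrac{1}{m^{1/\ve}} \geq \tr Y_{t_f} = \tr \exp(-\Phi(X_{t_f})) > \exp(-\alpha)$, which is exactly your per-coordinate bound $\exp(-\tr A_i X_{t_f}) \leq \tr Y_{t_f} \leq \tfrac{1}{m^{1/\ve}} < 1$ specialized to the index achieving the minimum; your explicit note that $m\geq 2$ is needed for the final strict inequality is a useful clarification the paper leaves implicit.
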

\begin{proof}
Follows since $\frac{1}{m^{1/\ve}} \geq \tr Y_{t_f} = \tr \exp(- \Phi(X_{t_f})) >  \exp (- \alpha) \enspace .$
\end{proof}
Following lemma shows that for any time $t$, $\|\Phi^*(Y_t)\|$ is not much larger than $(1+\ve_0)^{\thr}$.
\begin{lemma}
\label{lem:normphi}
For all $t \leq t_f$, $\|\Phi^*(Y_t)\|\leq (1+\ve_0)^{\thr}(1+\ve_1).$ 
\end{lemma}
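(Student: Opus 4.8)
The plan is to track what the algorithm does to the parameter $k$ inside Step (3), which controls the scale $(1+\ve_0)^{\thr}$. The key observation is that $\thr$ is only ever obtained by first setting $\thr' = k$ (where $k$ has just been decremented in Step (3a) until $\|\Phi^*(Y_t)\| < (1+\ve_0)^k$ fails, i.e. until $\|\Phi^*(Y_t)\| \geq (1+\ve_0)^k$ but $\|\Phi^*(Y_t)\| < (1+\ve_0)^{k+1}$), and then possibly decrementing $\thr'$ further in Step (3c). So immediately after Step (3a) we have the clean bound $(1+\ve_0)^k \leq \|\Phi^*(Y_t)\| < (1+\ve_0)^{k+1}$. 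If Step (3c) never decrements, then $\thr = k$ and $\|\Phi^*(Y_t)\| < (1+\ve_0)^{k+1} = (1+\ve_0)^{\thr}(1+\ve_0) \leq (1+\ve_0)^{\thr}(1+\ve_1)$ and we are done. The work is in handling the decrements in Step (3c).

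First I would analyze a single decrement in Step (3c). Suppose $\thr' \to \thr'-1$ happens; this is triggered precisely when
\[
N_{(1+\ve_0)^{\thr'-1}}(\Phi^*(Y_t)) \geq \left(1 + \tfrac{2}{5}\ve\right) N_{(1+\ve_0)^{\thr'}}(\Phi^*(Y_t)).
\]
I want to bootstrap from a bound $\|\Phi^*(Y_t)\| \leq (1+\ve_0)^{\thr'}(1+\delta)$ at level $\thr'$ to a bound at level $\thr'-1$. The point is that when the top eigenvalue already lies within a $(1+\delta)$ factor of $(1+\ve_0)^{\thr'}$, it automatically lies within a much larger factor of $(1+\ve_0)^{\thr'-1}$, namely within $(1+\delta)(1+\ve_0)$, so the norm bound at the lower level is never the binding constraint — after the first decrement the norm is trivially at most $(1+\ve_0)^{\thr'-1}\cdot(1+\ve_0)^2$, which is already good. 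The subtlety the statement is guarding against is that $\thr$ could drop \emph{many} levels; but each further decrement requires the sum of eigenvalues above the (lower) threshold to jump by a factor $(1+\frac{2}{5}\ve)$, and since every eigenvalue counted is at most $\|\Phi^*(Y_t)\|$ while $N_{(1+\ve_0)^{k}}(\Phi^*(Y_t)) \geq (1+\ve_0)^k$ (the top eigenvalue alone is at least this), the number of decrements is bounded, which pins down how far $\thr$ can be below the level where $\|\Phi^*(Y_t)\| \in [(1+\ve_0)^k,(1+\ve_0)^{k+1})$. Combining: if $\thr$ is $j$ levels below $k$, then $\|\Phi^*(Y_t)\| < (1+\ve_0)^{k+1} = (1+\ve_0)^{\thr}(1+\ve_0)^{j+1}$, and one shows $(1+\ve_0)^{j+1} \leq 1+\ve_1$ using the definitions $\ve_0 = \frac{\ve^2}{\ln^2 n}$, $\ve_1 = \frac{3\ve}{\ln n}$, and the bound on $j$ from the counting argument (which forces $j \ve_0 \ll \ve_1$).

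The main obstacle I expect is getting the quantitative bound on the number of decrements $j$ tight enough that $(1+\ve_0)^{j+1} \leq 1+\ve_1$ holds with room to spare. The counting argument has to use that $N_{(1+\ve_0)^{\thr}}(\Phi^*(Y_t)) \geq N_{(1+\ve_0)^{k}}(\Phi^*(Y_t)) \geq (1+\ve_0)^k$ is at least the top eigenvalue, and that it cannot exceed $n \cdot \|\Phi^*(Y_t)\| \leq n(1+\ve_0)^{k+1}$; so the chain of $j$ multiplicative jumps by $(1+\frac{2}{5}\ve)$ gives $(1+\frac{2}{5}\ve)^j \leq n(1+\ve_0)$, hence $j = O(\frac{\ln n}{\ve})$. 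Then $j\ve_0 = O(\frac{\ln n}{\ve}\cdot\frac{\ve^2}{\ln^2 n}) = O(\frac{\ve}{\ln n})$, which is comfortably below $\ve_1 = \frac{3\ve}{\ln n}$ for $n$ large and $\ve$ small, and $(1+\ve_0)^{j+1} \leq e^{(j+1)\ve_0} \leq 1 + \ve_1$ follows. The remaining care is purely bookkeeping: confirming that $\thr'$ is always initialized at the current $k$ and that $k$ is monotone non-increasing across the run so that the bound proved at one time step is not invalidated later.
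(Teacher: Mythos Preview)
Your proposal is correct and follows essentially the same approach as the paper: after Step~3(a) one has $\|\Phi^*(Y_t)\|<(1+\ve_0)^{k+1}$, the loop in Step~3(c) can run at most $j=O(\ln n/\ve)$ times because each decrement forces $N_{(1+\ve_0)^{\thr'}}(\Phi^*(Y_t))$ to grow by a factor $(1+\tfrac{2}{5}\ve)$ while it is sandwiched between $\|\Phi^*(Y_t)\|$ and $n\|\Phi^*(Y_t)\|$ (the paper states this as $\tr\Phi^*(Y_t)\leq n\,N_{(1+\ve_0)^k}(\Phi^*(Y_t))$), and then $(1+\ve_0)^{j+1}\leq 1+\ve_1$ by the choice of $\ve_0,\ve_1$. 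Your side remark about the invariant $\|\Phi^*(Y_t)\|<(1+\ve_0)^{k+1}$ persisting across time steps is justified because $Y_{t+1}\leq Y_t$ entrywise (hence $\|\Phi^*(Y_{t+1})\|\leq\|\Phi^*(Y_t)\|$), which the paper leaves implicit.
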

\begin{proof}
Fix any $t \leq t_f$. As $\tr(\Phi^*(Y_t))\leq
nN_{(1+\ve_0)^k}(\Phi^*(Y_t))$, the loop at Step 3(c) runs at most
$\frac{\ln n}{\ln(1+\frac{2\ve}{5})}$ times. Hence
\begin{align*}
\|\Phi^*(Y_t)\| & \leq(1+\ve_0)^{k+1} \leq(1+\ve_0)^{\thr}(1+\ve_0)^{\frac{\ln
n}{\ln(1+\frac{2\epsilon }{5})}+1} \\
& <(1+\ve_0)^{\thr}(1+\frac{3\ve}{\ln
n})=(1+\ve_0)^{\thr}(1+\ve_1).
\end{align*}
\end{proof}
Following lemma shows that as $t$ increases, there is a reduction in the trace of the dual variable in terms of the trace of the primal variable.
\begin{lemma}\label{lem:thr}
For all $t \leq t_f$ we have, $\tr Y_{t+1} \leq \tr Y_t  - \lambda_t
\cdot (1 - 4\sqrt{\ve}) \cdot \norm{\Phi^*(Y_t)} \cdot (\tr \Pi_t)
\enspace .$
\end{lemma}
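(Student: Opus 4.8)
The plan is to expand $\tr Y_{t+1} = \tr\exp(-\Phi(X_{t+1})) = \tr\exp(-\Phi(X_t) - \lambda_t\Phi(\Pi_t))$ and compare it with $\tr Y_t = \tr\exp(-\Phi(X_t))$. Since $\Phi(X_t)$ and $\Phi(\Pi_t)$ are both diagonal (this is the one place the positive-semidefinite structure makes life easy: $\Phi$ always lands in diagonal matrices, so the two commute), the problem decouples coordinate-by-coordinate: writing $a_i = (\Phi(X_t))(i,i)$ and $d_i = (\Phi(\Pi_t))(i,i) = \tr A_i\Pi_t$, we have $\tr Y_t = \sum_i e^{-a_i}$, $\tr Y_{t+1} = \sum_i e^{-a_i} e^{-\lambda_t d_i}$, and $Y_t(i,i) = e^{-a_i}$. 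So the claimed inequality becomes
$$\sum_i Y_t(i,i)\, e^{-\lambda_t d_i} \le \sum_i Y_t(i,i) - \lambda_t(1-4\sqrt{\ve})\,\norm{\Phi^*(Y_t)}\sum_i Y_t(i,i) d_i,$$
i.e. it suffices to show $\sum_i Y_t(i,i)(1 - e^{-\lambda_t d_i}) \ge \lambda_t(1-4\sqrt{\ve})\,\norm{\Phi^*(Y_t)}\sum_i Y_t(i,i)d_i$, and note $\sum_i Y_t(i,i) d_i = \tr Y_t\Phi(\Pi_t)$.

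Next I would split the index set according to whether $\lambda_t d_i$ is large or small, matching the projections $P^{\geq}_{\lambda_t}, P^{\leq}_{\lambda_t}$ from the algorithm: since $\Phi(\lambda_t\Pi_t)$ is diagonal with entries $\lambda_t d_i$, the index $i$ lies in the range of $P^{\geq}_{\lambda_t}$ iff $\lambda_t d_i \ge 2\sqrt{\ve}$ and in that of $P^{\leq}_{\lambda_t}$ iff $\lambda_t d_i \le 2\sqrt{\ve}$. On the ``small'' indices I use the elementary bound $1 - e^{-x} \ge x - x^2/2 \ge x(1 - \sqrt{\ve})$ valid for $0 \le x \le 2\sqrt{\ve}$, giving a contribution $\ge (1-\sqrt{\ve})\lambda_t\sum_{\text{small}} Y_t(i,i)d_i$, and by condition $2$ of Step (3d) this sum over small indices is $\ge (1-\sqrt{\ve})\tr Y_t\Phi(\Pi_t)$; so the small indices already contribute $\ge (1-\sqrt{\ve})^2\lambda_t\tr Y_t\Phi(\Pi_t)$. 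On the ``large'' indices I use $1 - e^{-x} \ge 0$ (or more wastefully just drop them), which is fine because we don't need them. Then I multiply through by $\norm{\Phi^*(Y_t)} \ge 1$ on the right-hand side: here is the point — the target has a factor $\norm{\Phi^*(Y_t)}$ which is $\ge (1+\ve_0)^{\thr} \ge 1$, and since the algorithm normalizes so that eigenvalues are measured against this scale, one needs to check that $(1-\sqrt{\ve})^2 \ge (1-4\sqrt{\ve})\norm{\Phi^*(Y_t)}/(\text{something})$; more precisely I'd actually keep the factor $\norm{\Phi^*(Y_t)}$ by observing it cancels against an upper bound on the relevant entries — I'll need $\lambda_t d_i \le$ (absolute constant) so that $e^{-\lambda_t d_i}$ can be compared to a linear term with the right constant. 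Collecting $(1-\sqrt{\ve})^2 \ge 1 - 4\sqrt{\ve}$ for small $\ve$ closes the arithmetic.

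The main obstacle is getting the constant right: I must be careful that the factor $\norm{\Phi^*(Y_t)}$ on the right-hand side is genuinely absorbed. The cleanest route is to note $\tr Y_t\Phi(\Pi_t) = \sum_i Y_t(i,i)d_i$ while $\norm{\Phi^*(Y_t)} = \norm{\sum_i Y_t(i,i)A_i}$, and these are \emph{not} obviously comparable termwise — so instead I would bound the left-hand side sum directly against $\norm{\Phi^*(Y_t)}\tr Y_t\Phi(\Pi_t)$ by using that on the support of $\Pi_t$ the operator $\Phi^*(Y_t)$ acts with eigenvalues between $(1+\ve_0)^{\thr}$ and $\norm{\Phi^*(Y_t)} \le (1+\ve_0)^{\thr}(1+\ve_1)$ by Lemma~\ref{lem:normphi}, so that $\tr Y_t\Phi(\Pi_t) = \tr\Phi^*(Y_t)\Pi_t$ is within a $(1+\ve_1)$ factor of $\norm{\Phi^*(Y_t)}\tr\Pi_t$. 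This replaces $\norm{\Phi^*(Y_t)}\tr\Pi_t$ by $(1\pm\ve_1)\tr Y_t\Phi(\Pi_t)$ throughout, after which the whole inequality lives entirely among the quantities $\lambda_t$, $\tr Y_t\Phi(\Pi_t)$, and absolute constants, and the remaining work is the one-variable estimate $1-e^{-x}\ge(1-\sqrt{\ve})x$ on $[0,2\sqrt{\ve}]$ plus bookkeeping of the $(1-\sqrt{\ve})$, $(1+\ve_1)$, and $(1-4\sqrt{\ve})$ factors, all of which combine correctly once $\ve$ is small and $n$ large.
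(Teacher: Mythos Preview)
Your approach is essentially the paper's: use that $\Phi(X_t)$ and $\Phi(\Pi_t)$ are diagonal (hence commute), restrict to the ``small'' part $P^{\leq}_{\lambda_t}$, linearize $e^{-x}\le 1-(1-2\sqrt{\ve})x$ on $[0,2\sqrt{\ve}]$, invoke condition~2 of Step~3(d) to pass from the small part back to $\tr Y_t\Phi(\Pi_t)$, and finally use Lemma~\ref{lem:normphi} together with the adjoint identity $\tr Y_t\Phi(\Pi_t)=\tr\Phi^*(Y_t)\Pi_t$ to replace this by $(1-\ve_1)\norm{\Phi^*(Y_t)}\tr\Pi_t$. Your third paragraph lands exactly on this chain.

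Two slips to clean up before that point, though. First, the displayed ``claimed inequality becomes'' in your first paragraph is not a faithful transcription of the lemma: the right-hand side of the target has $\tr\Pi_t$, not $\sum_i Y_t(i,i)d_i=\tr Y_t\Phi(\Pi_t)$; these differ by roughly a factor of $\norm{\Phi^*(Y_t)}$, which is the whole point of the last step. Second, the remark ``multiply through by $\norm{\Phi^*(Y_t)}\ge 1$'' is both logically off (you cannot multiply one side of an inequality) and factually off ($\norm{\Phi^*(Y_t)}$ is only guaranteed $\ge (1+\ve_0)^k$, and $k$ can be negative in later phases). Once you drop those two sentences and go straight from $(1-\sqrt{\ve})^2\lambda_t\tr Y_t\Phi(\Pi_t)$ to $(1-\sqrt{\ve})^2(1-\ve_1)\lambda_t\norm{\Phi^*(Y_t)}\tr\Pi_t$ via Lemma~\ref{lem:normphi}, as your third paragraph describes, the argument is correct and coincides with the paper's.
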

\begin{proof}
Fix any $t\leq t_f$. Let $B = P^\leq_{\lambda_t} \Phi(\lambda_t \Pi_t)
P^\leq_{\lambda_t}$. Note that $B \leq \Phi(\lambda_t \Pi_t)$ and
also $B \leq 2\sqrt{\ve} \I$. Second last inequality below follows from Lemma~\ref{lem:normphi} which shows that all eigenvalues of $\Pi_t
\Phi^*(Y_t) \Pi_t$ are at least $(1-\ve_1) \norm{\Phi^*(Y_t)}$.
\begin{align*}
\tr Y_{t+1} & = \tr \exp (-\Phi (X_t)   - \Phi(\lambda_t \Pi_t)) \\
& \leq \tr \exp (-\Phi (X_t) - B) \quad ( \text{since for } A_1 \geq A_2, \tr \exp(A_1) \geq \tr \exp(A_2)) \\
& = \tr \exp(-\Phi (X_t)) \exp(-B) \\ 
& \leq \tr \exp(-\Phi (X_t)) ( \I - (1-2\sqrt{\ve}) B) \quad ( \text{since for } A \leq \I : \exp(-\delta A) \leq \I - \delta(1-\delta)A)\\
& = \tr Y_t - (1-2\sqrt{\ve}) \tr Y_t  B \\
& \leq \tr Y_t - (1-\sqrt{\ve}) (1-2\sqrt{\ve}) \tr Y_t  \Phi(\lambda_t \Pi_t) \quad (\text{from step 3(d) part 1.}) \\
& =  \tr Y_t - (1-\sqrt{\ve}) (1-2\sqrt{\ve}) \tr \Phi^*(Y_t)  \lambda_t \Pi_t \\
& \leq \tr Y_t - (1-\ve_1)(1-\sqrt{\ve}) (1-2\sqrt{\ve}) \lambda_t
\norm{\Phi^*(Y_t)} (\tr \Pi_t) \enspace \\
& \leq \tr Y_t - (1-4\sqrt{\ve}) \lambda_t \norm{\Phi^*(Y_t)} (\tr
\Pi_t) .
\end{align*}
\end{proof}
Following lemma relates the trace of $X_{t_f}$ with the trace of $Y^*$ and $Y_{t_f}$.
\begin{lemma}\label{lem:final}
$\tr X_{t_f} \leq \frac{1}{(1-4\sqrt{\ve})} \cdot (\tr Y^*) \cdot
\ln (m/\tr Y_{t_f}) \enspace . $
\end{lemma}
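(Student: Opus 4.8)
The plan is to bound $\tr X_{t_f}$ one step at a time, using the per-step decrease of $\tr Y_t$ from Lemma~\ref{lem:thr}, and then to convert the resulting telescoping sum into a logarithm. First note that since $X_0 = 0$ and $X_{t+1} = X_t + \lambda_t \Pi_t$, we have $\tr X_{t_f} = \sum_{t=0}^{t_f - 1} \lambda_t \tr \Pi_t$. Also $Y_0 = \exp(-\Phi(X_0)) = \exp(0) = \I$ (the $m\times m$ identity), so $\tr Y_0 = m$; and since each $Y_t$ is positive definite and the term subtracted in Lemma~\ref{lem:thr} is nonnegative, the sequence $\tr Y_0 \geq \tr Y_1 \geq \cdots \geq \tr Y_{t_f} > 0$ is non-increasing and strictly positive, so all the logarithms below are well defined.

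Next, rearrange Lemma~\ref{lem:thr} to obtain, for each $t < t_f$,
\[
\lambda_t \tr \Pi_t \;\leq\; \frac{\tr Y_t - \tr Y_{t+1}}{(1 - 4\sqrt{\ve})\,\norm{\Phi^*(Y_t)}}.
\]
The key step is to lower-bound $\norm{\Phi^*(Y_t)}$ using the dual solution $Y^*$ produced by the algorithm: by the definition of $Y^*$ (last step of the algorithm), $\tr Y^* = \tr Y_{t'}/\norm{\Phi^*(Y_{t'})} \geq \tr Y_t/\norm{\Phi^*(Y_t)}$ for every time step $t$, hence $\norm{\Phi^*(Y_t)} \geq \tr Y_t / \tr Y^*$. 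Substituting this in,
\[
\lambda_t \tr \Pi_t \;\leq\; \frac{\tr Y^*}{1 - 4\sqrt{\ve}} \cdot \frac{\tr Y_t - \tr Y_{t+1}}{\tr Y_t} \;=\; \frac{\tr Y^*}{1 - 4\sqrt{\ve}}\left(1 - \frac{\tr Y_{t+1}}{\tr Y_t}\right).
\]

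Finally, sum over $t = 0, \ldots, t_f - 1$ and apply the elementary inequality $1 - x \leq \ln(1/x)$ (valid for $x > 0$) with $x = \tr Y_{t+1}/\tr Y_t \in (0,1]$; the sum telescopes, giving
\[
\tr X_{t_f} \;\leq\; \frac{\tr Y^*}{1 - 4\sqrt{\ve}}\sum_{t=0}^{t_f - 1}\ln\frac{\tr Y_t}{\tr Y_{t+1}} \;=\; \frac{\tr Y^*}{1 - 4\sqrt{\ve}}\,\ln\frac{\tr Y_0}{\tr Y_{t_f}} \;=\; \frac{1}{1-4\sqrt{\ve}}\,(\tr Y^*)\,\ln(m/\tr Y_{t_f}),
\]
which is exactly the claimed bound. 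I do not expect a genuine obstacle in this argument: the only things needing care are that $\tr Y_t$ stays positive and non-increasing (so the logarithms and the telescoping make sense) and that $\norm{\Phi^*(Y_t)} > 0$ (so the first division is legitimate), both of which follow from $Y_t \succ 0$ together with the non-triviality of the instance. The one nonroutine idea is recognizing that the best dual ratio $\tr Y^* = \max_t \tr Y_t/\norm{\Phi^*(Y_t)}$ is precisely the quantity needed to replace $\norm{\Phi^*(Y_t)}$ and turn the weighted sum $\sum_t \lambda_t \tr \Pi_t$ into a clean telescoping logarithm.
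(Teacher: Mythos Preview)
Your proof is correct and follows essentially the same route as the paper: both start from Lemma~\ref{lem:thr}, use the defining property of $Y^*$ to replace $\norm{\Phi^*(Y_t)}$ by $\tr Y_t/\tr Y^*$, apply the elementary inequality linking $1-x$ and $\ln(1/x)$ (equivalently $e^{-a}\ge 1-a$), and telescope. The only cosmetic difference is that the paper keeps the ratio $\tr Y_{t+1}/\tr Y_t$, exponentiates, and multiplies, whereas you isolate $\lambda_t\tr\Pi_t$, take logarithms, and sum.
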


\begin{proof}
Using Lemma~\ref{lem:thr} we have,

\begin{align*}
\frac{\tr Y_{t+1}}{\tr Y_{t}} & \leq 1 - \frac{ (1-4\sqrt{\ve}) \lambda_t \norm{\Phi^*(Y_t)} (\tr \Pi_t) }{\tr Y_t} \\
& \leq  \exp \left(- \frac{ (1-4\sqrt{\ve}) \lambda_t \norm{\Phi^*(Y_t)} (\tr \Pi_t) }{\tr Y_t} \right ) \quad (\text{since } \exp(-x) \geq 1 -x) \\
& \leq   \exp \left(- \frac{ (1-4\sqrt{\ve}) \lambda_t  \tr \Pi_t }{\tr Y^*}\right ) \quad (\text{from property of } Y^*) \\
& =   \exp \left(- \frac{ (1-4\sqrt{\ve})  \tr (X_{t+1} - X_t)}{\tr Y^*}\right)  .
\end{align*}
This implies,
\begin{align*}
 \tr Y_{t_f} & \leq (\tr Y_0) \exp\left(- \frac{ (1-4\sqrt{\ve}) \tr X_{t_f} }{\tr Y^*} \right) \\
 \Rightarrow \tr X_{t_f} & \leq \frac{(\tr Y^*) \ln (m/(\tr Y_{t_f}))}{(1-4\sqrt{\ve})} \quad (\text{since } \tr Y_0 = m) .
\end{align*}
\end{proof}
We can now finally bound the trace of $X^*$ in terms of the trace of $Y^*$.
\begin{theorem}
$X^*$ and $Y^*$ are feasible for the $P$ and $D$ respectively and
$$\tr X^*  \leq (1+5\sqrt{\ve}) \tr Y^* \enspace .$$
Therefore, since $\opt(P) = \opt(D)$,
$$\opt(D) = \opt(P) \leq \tr X^*  \leq (1+5\sqrt{\ve}) \tr Y^* \leq (1+5\sqrt{\ve}) \opt(D) = (1+5\sqrt{\ve}) \opt(P)  \enspace .$$

\end{theorem}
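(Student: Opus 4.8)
The plan is to assemble the three pieces already in place. First I would establish feasibility. For $Y^*$, note that $Y^* = Y_{t'}/\norm{\Phi^*(Y_{t'})}$, so $\Phi^*(Y^*) = \Phi^*(Y_{t'})/\norm{\Phi^*(Y_{t'})} \leq \I$ by definition of the operator norm, and the entries are non-negative since each $Y_t = \exp(-\Phi(X_t))$ is a non-negative diagonal matrix; hence $Y^*$ is dual-feasible. For $X^*$, we have $X^* = X_{t_f}/\alpha$ where $\alpha$ is the minimum eigenvalue of $\Phi(X_{t_f})$; by Claim~\ref{claim:alpha}, $\alpha > 0$, so $X^* \geq 0$, and $\Phi(X^*) = \Phi(X_{t_f})/\alpha$ has minimum eigenvalue exactly $1$, i.e. $\tr A_i X^* \geq 1$ for all $i$; hence $X^*$ is primal-feasible.

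Next I would prove the quantitative inequality $\tr X^* \leq (1+5\sqrt{\ve})\tr Y^*$. Combining Lemma~\ref{lem:final} with $\tr X^* = \tr X_{t_f}/\alpha$ gives
\begin{align*}
\tr X^* \;=\; \frac{\tr X_{t_f}}{\alpha} \;\leq\; \frac{(\tr Y^*) \ln(m/\tr Y_{t_f})}{(1-4\sqrt{\ve})\,\alpha}\,.
\end{align*}
The crux is therefore to control $\ln(m/\tr Y_{t_f})$ against $\alpha$. Here I would use the termination condition of the main loop: the loop exits when $\tr Y_{t_f} \leq m^{-1/\ve}$, so $\ln(m/\tr Y_{t_f}) \leq (1 + \tfrac{1}{\ve})\ln m$. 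On the other hand, from the computation in Claim~\ref{claim:alpha}, $\tr Y_{t_f} = \tr\exp(-\Phi(X_{t_f})) > \exp(-\alpha)$, which gives $\alpha > \ln(1/\tr Y_{t_f}) \geq \tfrac{1}{\ve}\ln m$; but I also need a matching lower bound of the form $\alpha \geq \tfrac{1}{1-c\sqrt\ve}\ln(m/\tr Y_{t_f})$ so that the ratio $\ln(m/\tr Y_{t_f})/\alpha$ is close to $1$. This should follow from comparing $\alpha$ (the minimum eigenvalue of $\Phi(X_{t_f})$, which controls the largest eigenvalue of $\exp(-\Phi(X_{t_f}))$ and hence is at most $\ln(n/\tr Y_{t_f})$ by a dimension bound) with the termination guarantee, using that $\ln n$ is absorbed into the $\sqrt{\ve}$ slack via $\ve_0, \ve_1$ being polynomially small in $\ve/\ln n$. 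Putting these together, $\ln(m/\tr Y_{t_f}) \leq (1+O(\sqrt\ve))\,\alpha$, and then $(1-4\sqrt\ve)^{-1}(1+O(\sqrt\ve)) \leq 1+5\sqrt\ve$ for $\ve$ small, yielding the claim.

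Finally, the chain of inequalities in the statement is immediate: $X^*$ primal-feasible gives $\opt(P) \leq \tr X^*$; the bound just proved gives $\tr X^* \leq (1+5\sqrt\ve)\tr Y^*$; $Y^*$ dual-feasible gives $\tr Y^* \leq \opt(D)$; and strong duality $\opt(P) = \opt(D)$ closes the loop. I expect the main obstacle to be the second paragraph — specifically, showing that the "overshoot" in the stopping condition (how far below $m^{-1/\ve}$ the trace $\tr Y_{t_f}$ actually lands, and the corresponding gap between $\alpha$ and $\ln(1/\tr Y_{t_f})$) is small enough to be swallowed by the $\sqrt\ve$-order slack, which is where the precise choices of $\ve_0 = \ve^2/\ln^2 n$ and the threshold-adjustment loops in Step 3 must be invoked; the feasibility arguments and the final duality chain are routine.
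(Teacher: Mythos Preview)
Your feasibility arguments and the final duality chain are fine and match the paper. The issue is in the middle paragraph: you have already written down the two ingredients that finish the proof, but then fail to combine them and instead conjure a nonexistent obstacle.

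Concretely, from $\tr Y_{t_f} \geq \exp(-\alpha)$ you correctly extract $\alpha \geq \ln(1/\tr Y_{t_f})$, and from the termination condition you correctly extract $\ln(1/\tr Y_{t_f}) \geq \tfrac{1}{\ve}\ln m$. These two facts alone give
\[
\frac{\ln(m/\tr Y_{t_f})}{\alpha}\;\leq\;\frac{\ln(m/\tr Y_{t_f})}{\ln(1/\tr Y_{t_f})}
\;=\;1+\frac{\ln m}{\ln(1/\tr Y_{t_f})}\;\leq\;1+\ve,
\]
and then $\tr X^*\leq \frac{1+\ve}{1-4\sqrt\ve}\,\tr Y^*\leq (1+5\sqrt\ve)\,\tr Y^*$. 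This is exactly the paper's three-line computation. No ``matching lower bound'' beyond $\alpha\geq\ln(1/\tr Y_{t_f})$ is needed, and none of $\ve_0$, $\ve_1$, the threshold-adjustment loops, or any overshoot analysis enters this theorem at all; those devices are used only in the running-time analysis. Your proposed upper bound on $\alpha$ (which, incidentally, should read $\alpha\leq\ln(m/\tr Y_{t_f})$ since $\Phi(X_{t_f})$ is $m\times m$) is true but points the wrong way: you need $\alpha$ large, and you already have that. The ``overshoot'' worry is also a red herring: if $\tr Y_{t_f}$ lands far below $m^{-1/\ve}$, the ratio $\ln m/\ln(1/\tr Y_{t_f})$ only gets smaller.
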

\begin{proof}
It is easily verified that $X^*$ and $Y^*$ are feasible for $P$ and $D$ respectively. From Lemma~\ref{lem:final} we have,
$$ \alpha \tr X^* =  \tr X_{t_f} \leq \frac{1}{1-4\sqrt{\ve}} \cdot (\tr Y^*) \cdot \ln (m/\tr Y_{t_f}) \enspace .$$
Since $Y_{t_f}  = \exp(- \Phi(X_{t_f}))$ we have
$$\tr Y_{t_f} = \tr \exp(- \Phi(X_{t_f})) \geq \norm{\exp(- \Phi(X_{t_f}))} = \exp (- \alpha) \enspace .$$
Using above two equations we have,
\begin{align*}
 \tr X^*  & \leq \frac{1}{1-4\sqrt{\ve}} \cdot (\tr Y^*) \cdot \frac{\ln (m/\tr Y_{t_f})}{\ln (1/\tr Y_{t_f})} \\
 & = \frac{1}{1-4\sqrt{\ve}} \cdot (\tr Y^*) \cdot \left( 1 + \frac{\ln m}{\ln (1/\tr Y_{t_f})}\right) \\
 & \leq \frac{1 + \ve}{1-4\sqrt{\ve}} \cdot (\tr Y^*) \quad (\text{since } \tr Y_{t_f} \leq \frac{1}{m^{1/\ve}}) \\
& \leq (1 + 5\sqrt{\ve}) \cdot \tr Y^*  \enspace .
\end{align*}
\end{proof}

\subsection{Time complexity}
In this section we are primarily interested in bounding the number of iterations of the algorithm, that is we will bound $k_f$ and also the number of iterations for any given $k$.
We claim, without going into further details, that the actions required by the algorithm in any given iteration can all be performed in time $\polylog(n) \cdot \polylog(m) \cdot \poly(\frac{1}{\ve})$
(since operations for Hermitian matrices like eigenspace decomposition, exponentiation, and other operations like sorting and binary search for a list of real numbers etc. can be all be performed in $\polylog$ time).

Let us first introduce some notation. Let $A$ be a Hermitian matrix and $l$
be a real number. Let
\begin{itemize}

\item $\Pi^A_l$ denote the projector onto the space spanned by the eigenvectors of $A$
with eigenvalues at least $l$. Let $\Pi^A$ be shorthand for $\Pi^A_1$.

\item $N_l(A)$ denote the sum of eigenvalues of $A$ at least $l$. Thus
$N_l(A)=\tr \Pi^A_l A$. Let $N(A)$ be shorthand for $N_1(A)$.

\item $\lambda_k(A)$ denote the k-th largest eigenvalue of $A$. 

\item $\lambda^{\downarrow}(A) \defeq (\lambda_1(A),\cdots,\lambda_n(A))$.

\item for any two vectors $u,v\in\mathcal{R}^n$ we say $u$ majorizes $v$, denoted $u\succeq v$, iff $\sum_{i=1}^ku_i=\sum_{i=1}^kv_i$ and for any $j \in [n]$ we have, 
$\sum_{i=1}^ju_i\geq\sum_{i=1}^jv_i$.

\end{itemize}

We will need the following facts.
\begin{fact}\label{thm:lownerorder}\cite{Bhatia96}
For $n \times n$ Hermitian matrices $A$ and $B$, $A \geq  B$ implies
$\lambda_i(A)\geq\lambda_i(B)$ for all $1\leq i\leq n$. Thus
$N_l(A)\geq N_l(B)$ for any real number $l$.
\end{fact}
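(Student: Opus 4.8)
The plan is to prove the two assertions in turn, each by a standard argument. For the eigenvalue inequality $\lambda_i(A)\ge\lambda_i(B)$ I would invoke the Courant--Fischer min--max characterisation: for a Hermitian matrix $M$ and $1\le i\le n$,
\[
  \lambda_i(M)\;=\;\max_{\substack{V\subseteq\complex^n\\ \dim V=i}}\ \min_{\substack{v\in V\\ v\neq 0}}\ \frac{\langle v,Mv\rangle}{\langle v,v\rangle}.
\]
The hypothesis $A\ge B$ in the L\"owner order is precisely the statement that $\langle v,Av\rangle\ge\langle v,Bv\rangle$ for every $v$. Hence for each fixed $i$-dimensional subspace $V$ the inner minimum for $A$ is at least the inner minimum for $B$, and taking the maximum over all such $V$ gives $\lambda_i(A)\ge\lambda_i(B)$ for every $i$. (Equivalently one may quote Weyl's monotonicity principle applied to the perturbation $B=A+(B-A)$ with $\lambda_1(B-A)\le 0$; this is the form in which the fact is recorded in \cite{Bhatia96}.)

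For the second assertion I would deduce it from the first. Fix a real $l$; in every use in this paper $l\ge 1$, and it suffices (and I will assume) that $l\ge 0$. Write $k_A$ and $k_B$ for the numbers of eigenvalues of $A$ and of $B$ that are at least $l$, so that, since the top $k_A$ eigenvalues of $A$ are exactly those $\ge l$, we have $N_l(A)=\tr\Pi^A_l A=\sum_{i=1}^{k_A}\lambda_i(A)$ and likewise $N_l(B)=\sum_{i=1}^{k_B}\lambda_i(B)$. From the first assertion, whenever $\lambda_i(B)\ge l$ we also have $\lambda_i(A)\ge\lambda_i(B)\ge l$, so $k_A\ge k_B$. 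Then
\[
  N_l(A)-N_l(B)\;=\;\sum_{i=1}^{k_B}\bigl(\lambda_i(A)-\lambda_i(B)\bigr)\;+\;\sum_{i=k_B+1}^{k_A}\lambda_i(A)\;\ge\;0+(k_A-k_B)\,l\;\ge\;0,
\]
where the first bracketed sum is nonnegative by the eigenvalue inequality, each term $\lambda_i(A)$ with $k_B<i\le k_A$ is $\ge l$ by the definition of $k_A$, and $(k_A-k_B)l\ge 0$ because $l\ge 0$. This yields $N_l(A)\ge N_l(B)$.

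I do not expect a real obstacle: both halves are classical. The only points needing mild care are (i) the generic case $k_A\neq k_B$, handled by the telescoping display above rather than a naive term-by-term comparison, and (ii) the sign hypothesis $l\ge 0$, which is what makes the last step valid (and which holds in all applications in the paper, since the matrices involved are positive semidefinite and the relevant thresholds are positive). Eigenvalues exactly equal to $l$ cause no difficulty, since such an eigenvalue is counted in $k_A$ (resp.\ $k_B$) and contributes $l$ to $N_l$ consistently on both sides.
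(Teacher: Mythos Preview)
The paper does not give a proof of this Fact at all; it is simply stated with a citation to \cite{Bhatia96}. Your argument is therefore strictly more than the paper provides, and both halves are correct as written: the Courant--Fischer step is the standard proof of Weyl monotonicity, and your telescoping computation for $N_l$ is clean and handles the case $k_A\neq k_B$ properly.

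One remark worth making explicit: your caveat that the second assertion requires $l\ge 0$ is not merely a convenience but a genuine necessity. The paper's phrasing ``for any real number $l$'' is literally false. For instance, with $A=\Diag(-1,-2)$ and $B=\Diag(-3,-3)$ one has $A\ge B$, yet for $l=-5/2$ one gets $N_l(A)=-3<0=N_l(B)$. So your restriction to $l\ge 0$ (which, as you note, covers every invocation in the paper, where the thresholds are of the form $(1+\ve_0)^{\thr}>0$) is exactly the right repair of the statement, not an artifact of your particular argument.
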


\begin{fact}\label{thm:majorization}\cite{Bhatia96}
Let $A$ be an $n \times n$ Hermitian matrix and $P_1,\cdots,P_r$ be a
family of mutually orthogonal projections. Then
$\lambda^{\downarrow}(A)\succeq\lambda^{\downarrow}(\sum_iP_iAP_i).$
\end{fact}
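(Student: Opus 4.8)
The statement is the classical \emph{pinching inequality}: passing from $A$ to the pinched matrix $B := \sum_i P_i A P_i$ can only decrease the partial sums of the sorted spectrum. The plan is to verify directly the two requirements in the definition of $\lambda^{\downarrow}(A)\succeq\lambda^{\downarrow}(B)$: (i) the total-sum equality $\tr B=\tr A$, and (ii) the partial-sum inequalities $\sum_{j=1}^{k}\lambda_j(B)\le\sum_{j=1}^{k}\lambda_j(A)$ for every $k\in[n]$. The single tool I would use is Ky Fan's maximum principle, in the form
$$\sum_{j=1}^{k}\lambda_j(M)=\max\{\tr(XM): 0\le X\le\I,\ \tr X=k\}$$
for Hermitian $M$: the right-hand maximum is a linear functional on a compact convex set, hence attained at an extreme point, and the extreme points of $\{X:0\le X\le\I,\ \tr X=k\}$ are exactly the rank-$k$ orthogonal projections, for which $\tr(XM)$ is maximized by projecting onto a top-$k$ eigenspace of $M$. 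I would also take $\sum_i P_i=\I$ (as is implicit in calling $P_1,\dots,P_r$ a family of mutually orthogonal projections); see the last paragraph for the general case.

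For (i), mutual orthogonality and $\sum_i P_i=\I$ give $\tr B=\sum_i\tr(P_iAP_i)=\sum_i\tr(P_iA)=\tr\big((\sum_iP_i)A\big)=\tr A$. For (ii), fix $k$ and choose a rank-$k$ orthogonal projection $Q$ with $\tr(QB)=\sum_{j=1}^{k}\lambda_j(B)$ (Ky Fan applied to $B$). By cyclicity of the trace and $P_i^2=P_i$,
$$\tr(QB)=\sum_i\tr(QP_iAP_i)=\sum_i\tr\big((P_iQP_i)A\big)=\tr(RA),\qquad R:=\sum_iP_iQP_i .$$
Now $R$ is a sum of positive semidefinite matrices $P_iQP_i\ge 0$, so $R\ge 0$; from $0\le Q\le\I$ we get $P_iQP_i\le P_i$, hence $R\le\sum_iP_i=\I$; and $\tr R=\sum_i\tr(P_iQP_i)=\sum_i\tr(QP_i)=\tr\big(Q\sum_iP_i\big)=\tr Q=k$. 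Thus $R$ is feasible for Ky Fan's principle applied to $A$ at level $k$, and therefore
$$\sum_{j=1}^{k}\lambda_j(B)=\tr(RA)\le\max\{\tr(XA):0\le X\le\I,\ \tr X=k\}=\sum_{j=1}^{k}\lambda_j(A).$$
Combined with (i), this is precisely $\lambda^{\downarrow}(A)\succeq\lambda^{\downarrow}(B)$.

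The routine parts are the trace manipulations and the operator bounds $0\le R\le\I$; the only step that really deserves attention is choosing the \emph{right} variational characterization, namely that the ``fractional'' set $\{0\le X\le\I,\ \tr X=k\}$ has the same maximum of $\tr(XM)$ as the set of rank-$k$ projections — this is exactly what lets the pushed-forward operator $R$ (which is typically \emph{not} a projection) still be admissible for $A$. An equally short alternative avoids Ky Fan entirely: write the pinching as an average $B=\frac{1}{|G|}\sum_{U\in G}UAU^{*}$ over the finite abelian group $G$ of phase unitaries $\sum_i\omega^{a_i}P_i$ ($\omega$ a primitive $r$-th root of unity), and invoke the standard fact that $\lambda^{\downarrow}$ of a convex combination of Hermitian matrices is majorized by the corresponding convex combination of their $\lambda^{\downarrow}$'s — here all equal to $\lambda^{\downarrow}(A)$. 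Finally, if one insists on the hypothesis exactly as stated, without $\sum_iP_i=\I$, the same computation still gives $\tr R\le k$ and hence the weak-majorization bounds $\sum_{j=1}^{k}\lambda_j(B)\le\sum_{j=1}^{k}\max(\lambda_j(A),0)$; the obstruction to full majorization in that generality is only that the total-sum equality $\tr B=\tr A$ can fail, which is why the clean statement presumes the projections resolve the identity.
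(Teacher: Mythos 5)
Your proof is correct, and it supplies an argument where the paper gives none: Fact~\ref{thm:majorization} is simply cited from Bhatia's book, so there is no in-paper proof to compare against. What you wrote is essentially the standard textbook derivation of the pinching inequality via Ky Fan's maximum principle, and every step checks out: the identity $\tr(QB)=\tr(RA)$ with $R=\sum_iP_iQP_i$, the bounds $0\le R\le\I$ (using $P_iQP_i\le P_i$ and $\sum_iP_i\le\I$), and $\tr R=\tr Q=k$ when the projections resolve the identity. Your closing remark is also a genuinely useful observation about the statement as printed: without $\sum_iP_i=\I$ only the partial-sum (weak majorization) inequalities survive, since $\tr\bigl(\sum_iP_iAP_i\bigr)$ can differ from $\tr A$. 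This is exactly the situation in the paper's application inside Lemma~\ref{lem:mainlemma}, where the family $\{V_i\}$ excludes the subspace $W$ on which both projectors vanish; there the authors only invoke the inequality $\sum_{j\le j_0}\lambda_j(A+B)\ge\sum_{j\le j_0}\lambda_j\bigl(\sum_i(A_i+B_i)\bigr)$, and since all matrices involved are positive semidefinite the truncation $\max(\lambda_j(A),0)=\lambda_j(A)$ is harmless, so your weaker general form suffices for their use. The group-averaging alternative you sketch (phase unitaries plus concavity of partial spectral sums under convex combinations) is also valid, again under $\sum_iP_i=\I$, and is the other standard route in the literature.
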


\begin{fact}\cite{Jordan75}\label{lem:projectordecomposition}
For any two projectors $\Pi$ and $\Delta$, there exits an orthogonal
decomposition of the underlying vector space into one dimensional and two
dimensional subspaces that are invariant under both $\Pi$ and
$\Delta$. Moreover, inside each two-dimensional subspace, $\Pi$ and
$\Delta$ are rank-one projectors.
\end{fact}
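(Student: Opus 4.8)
\emph{Remark.} The statement (Fact~\ref{lem:projectordecomposition}) is the classical lemma of Jordan, so the plan below is really a way of recording a proof rather than discovering one; I describe how I would carry it out from scratch.

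The plan is to derive the whole decomposition from the spectral theory of a single Hermitian operator that commutes with both $\Pi$ and $\Delta$, namely $S\defeq\Pi-\Delta$. An elementary computation using $\Pi^2=\Pi$, $\Delta^2=\Delta$ gives $S^2=\Pi+\Delta-\Pi\Delta-\Delta\Pi$ together with $\Pi S^2=S^2\Pi=\Pi-\Pi\Delta\Pi$ and $\Delta S^2=S^2\Delta=\Delta-\Delta\Pi\Delta$, so $S^2$ commutes with both projectors; and the identity $(\Pi-\Delta)^2+(\Pi+\Delta-\I)^2=\I$ shows $0\le S^2\le\I$, so all eigenvalues of $S^2$ lie in $[0,1]$. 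First I would decompose the ambient space orthogonally as $\bigoplus_\mu V_\mu$ into eigenspaces of $S^2$; since $S^2$ commutes with $\Pi$ and $\Delta$, each $V_\mu$ is invariant under both, so it suffices to decompose each $V_\mu$.

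The extreme eigenvalues are immediate. On $V_0$ one has $S^2=0$, hence $S=0$ (a Hermitian operator with vanishing square is zero), so $\Pi=\Delta$ on $V_0$, and an orthonormal eigenbasis of the projector $\Pi|_{V_0}$ splits $V_0$ into one-dimensional invariant subspaces. On $V_1$ one has $(\Pi+\Delta-\I)^2=0$, hence $\Delta=\I-\Pi$ there, and again an orthonormal eigenbasis of $\Pi|_{V_1}$ gives one-dimensional invariant subspaces. The substance is the generic eigenvalue $0<\mu<1$. On such a $V_\mu$ I would write $V_\mu=R\oplus K$ orthogonally with $R=\operatorname{range}\Pi\cap V_\mu$ and $K=\ker\Pi\cap V_\mu$. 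From $S^2v=\mu v$ and $\Pi v=v$ one reads off $\Pi\Delta v=(1-\mu)v$ for $v\in R$, hence $\langle\Delta v,\Delta v'\rangle=\langle v,\Delta v'\rangle=\langle v,\Pi\Delta v'\rangle=(1-\mu)\langle v,v'\rangle$ for all $v,v'\in R$; so $\Delta$ is injective on $R$ and, up to the factor $\sqrt{1-\mu}$, an isometry there. Since $\mu\notin\{0,1\}$, for $v\neq 0$ in $R$ the vectors $v$ and $\Delta v$ are linearly independent (else $\Delta v\in\{0,v\}$, both excluded by $\Pi\Delta v=(1-\mu)v$), so $W_v\defeq\operatorname{span}(v,\Delta v)$ is two-dimensional, invariant under $\Pi$ (as $\Pi v=v$, $\Pi\Delta v=(1-\mu)v$) and under $\Delta$ (trivially), with $\Pi$ and $\Delta$ rank-one on it. I would then fix an orthonormal basis $v_1,\dots,v_d$ of $R$ and take the blocks $W_{v_1},\dots,W_{v_d}$: the identities $\langle v_i,\Delta v_j\rangle=(1-\mu)\langle v_i,v_j\rangle$ and $\langle\Delta v_i,\Delta v_j\rangle=(1-\mu)\langle v_i,v_j\rangle$ make these blocks pairwise orthogonal, and a dimension count closes the argument --- using $\operatorname{range}\Pi\cap\operatorname{range}\Delta\cap V_\mu=0$ (there $Su=0$ forces $\mu\langle u,u\rangle=0$) and the fact that $\operatorname{range}\Pi\cap V_\mu$ and $\operatorname{range}\Delta\cap V_\mu$ together span $V_\mu$ (any $z\in V_\mu$ orthogonal to both has $\Pi z=\Delta z=0$, hence $\mu z=S^2z=0$) one gets $\dim V_\mu=2d$, so $\bigoplus_j W_{v_j}=V_\mu$.

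The only real obstacle is the generic case: a priori the naive two-dimensional blocks $\operatorname{span}(v,\Delta v)$ built from different basis vectors of $R$ need not be mutually orthogonal, and one has to see they can be made so. This is exactly what the isometry identity $\langle\Delta v,\Delta v'\rangle=(1-\mu)\langle v,v'\rangle$ supplies for free; without it one would fall back on a Gram--Schmidt argument on $R$. Everything else is routine linear algebra, and assembling the $V_0$, $V_1$, and $V_\mu$ pieces yields the claimed decomposition of the whole space into one- and two-dimensional subspaces invariant under both $\Pi$ and $\Delta$, with $\Pi,\Delta$ rank-one on each two-dimensional piece.
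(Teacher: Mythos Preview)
The paper does not prove this statement at all: it is recorded as a \emph{Fact} with a citation to Jordan~\cite{Jordan75} and then used as a black box in the analysis. So there is no ``paper's own proof'' to compare against.

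Your proposal is a correct and self-contained proof of Jordan's lemma. The key device---diagonalising the Hermitian operator $S^2=(\Pi-\Delta)^2$, which commutes with both projectors---is the standard modern route, and you handle the three regimes ($\mu=0$, $\mu=1$, $0<\mu<1$) cleanly. The only place worth a second look is the dimension count at the end: you assert $\dim V_\mu=2d$ from the fact that $\operatorname{range}\Pi\cap V_\mu$ and $\operatorname{range}\Delta\cap V_\mu$ span $V_\mu$ with trivial intersection, but this gives $\dim V_\mu=d+\dim(\operatorname{range}\Delta\cap V_\mu)$, and you should make explicit that the latter dimension is also $d$ (which follows by symmetry, or from the injectivity of $v\mapsto\Delta v$ on $R$ together with the reverse map $w\mapsto\Pi w$). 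With that one line added, the argument is complete and would serve perfectly well as a proof the paper chose to omit.
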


\begin{lemma}\label{lem:numofphase}
Let $k_f$ be the final value of $k$. Then
$k_s-k_f=\mathcal{O}(\frac{\log m\log^2 n}{\ve^3})$.
\end{lemma}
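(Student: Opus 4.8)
The plan is to exploit the fact that the algorithm initialises $k$ to $k_s$ and thereafter only ever \emph{decreases} $k$, and only in Step~3(a). Consequently $k_s-k_f$ is exactly the total number of decrements performed over the whole run, and it suffices to give an upper bound on $k_s$ together with a lower bound on the final value $k_f$.

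For the upper bound on $k_s$: since $X_0=0$ we have $Y_0=\exp(-\Phi(X_0))=\I$, so by definition of $k_s$,
\[
(1+\ve_0)^{k_s}\leq\norm{\Phi^*(\I)}=\norm{\textstyle\sum_{i=1}^m A_i}\leq\sum_{i=1}^m\norm{A_i}\leq m,
\]
using $\norm{A_i}\leq1$ from the special form; hence $k_s\leq\ln m/\ln(1+\ve_0)$. For the lower bound on $k_f$: if $k$ is never decremented then $k_s=k_f$ and there is nothing to prove, so I would let $\tau$ be the last time step at which the Step~3(a) loop decrements $k$ (note $\tau\leq t_f-1$, since Step~3 is not entered at time $t_f$ because $\tr Y_{t_f}\leq m^{-1/\ve}$). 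Immediately after that loop terminates, $k$ has reached $k_f$ and satisfies $(1+\ve_0)^{k_f}\leq\norm{\Phi^*(Y_\tau)}<(1+\ve_0)^{k_f+1}$: the left inequality is the loop-exit condition, and the right one holds because the last decrement was performed while $k$ still equalled $k_f+1$. In particular $k_f>\log_{1+\ve_0}\norm{\Phi^*(Y_\tau)}-1$.

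The only step with real content is then lower-bounding $\norm{\Phi^*(Y_\tau)}$. Since Step~3 is entered at time $\tau$ we have $\tr Y_\tau>m^{-1/\ve}$, so some diagonal entry satisfies $(Y_\tau)_{ii}>m^{-1-1/\ve}$. One may assume every $A_i\neq0$ (otherwise the primal constraint $\tr A_iX\geq1$ is infeasible), so $\norm{A_i}$ is at least the smallest non-zero eigenvalue of $A_i$, which the special form guarantees is at least $1/\gamma=\ve^2/m^2$; combining this with $\Phi^*(Y_\tau)\geq (Y_\tau)_{ii}A_i\geq0$ gives $\norm{\Phi^*(Y_\tau)}\geq (Y_\tau)_{ii}\norm{A_i}>\ve^2/m^{3+1/\ve}$.

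Finally I would put the pieces together:
\[
k_s-k_f \;<\; \frac{\ln m-\ln(\ve^2 m^{-3-1/\ve})}{\ln(1+\ve_0)}+1 \;=\; \frac{(4+1/\ve)\ln m+2\ln(1/\ve)}{\ln(1+\ve_0)}+1,
\]
and use that for $n$ large and $\ve$ small $\ve_0=\ve^2/\ln^2 n\leq1$, hence $\ln(1+\ve_0)\geq\ve_0/2=\ve^2/(2\ln^2 n)$, while the numerator is $\mathcal{O}(\ln m/\ve)$. This yields $k_s-k_f=\mathcal{O}(\ln^2 n\cdot\ln m/\ve^3)$, the claimed bound. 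The main obstacle is precisely the lower bound on $\norm{\Phi^*(Y_\tau)}$: it is here, and only here, that the minimum-non-zero-eigenvalue guarantee of the special form is essential — without it $\norm{\Phi^*(Y_\tau)}$, and hence $k_f$, could drift uncontrollably low. Everything else is bookkeeping on the $(1+\ve_0)$-geometric scale.
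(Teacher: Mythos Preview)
Your proof is correct and follows essentially the same approach as the paper: bound $k_s$ from above via $\norm{\Phi^*(\I)}\leq m$, and bound $k_f$ from below by lower-bounding $\norm{\Phi^*(Y_\tau)}$ at a time $\tau$ where $\tr Y_\tau>m^{-1/\ve}$, using the special-form eigenvalue guarantee. The only cosmetic difference is that the paper works at time $t_f-1$ and passes through the trace (using $m\norm{\Phi^*(Y)}\geq\tr\Phi^*(Y)=\sum_i y_i\tr A_i\geq\tr Y/\gamma$), whereas you pick a single large coordinate $(Y_\tau)_{ii}$ and bound $\norm{\Phi^*(Y_\tau)}\geq(Y_\tau)_{ii}\norm{A_i}$; both routes yield the same $\mathcal{O}(\log m\log^2 n/\ve^3)$ bound.
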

\begin{proof}
Note that $\norm{\Phi^*(\I)} = \norm{\sum_{i=1}^m A_i} \leq m$, since
for each $i, \norm{A_i} \leq 1$. Hence
$$ k_s = \mathcal{O}((\log m) /\ve_0) \enspace .$$
Let $Y_{t_f-1} =\Diag(y_1, \ldots y_m)$.  We have (since $\tr A_i \geq \frac{1}{\gamma} \geq \frac{\ve^2}{m^2}$ for each $i$),
\begin{align*}
m (1+\ve_0)^{k_f+1} & \geq m \norm {\Phi^*(Y_{t_f-1})} \geq \tr {\Phi^*(Y_{t_f-1})} \\
& = \sum_{i=1}^m y_i \tr A_i \geq  \frac{\sum_{i=1}^m y_i}{\gamma} = \frac{\tr Y_{t_f-1}}{\gamma} \geq \frac{1}{m^{1/\ve}\gamma} \geq \frac{\ve^2}{m^{2+ 1/\ve}}\enspace .
\end{align*}
Hence $k_f \geq - \mathcal{O}(\frac{\log m }{\ve\ve_0})$. Therefore
$k_s - k_f = \mathcal{O}(\frac{\log m
}{\ve\ve_0})=\mathcal{O}(\frac{\log m\log^2 n}{\ve^3})$.
\end{proof}
\begin{theorem}\label{thm:timecomplexity}
For any fixed $k$, the number of iterations of the algorithm is at most 
$\O(\frac{\log^2 n}{\ve_1^9\ve})$.  Hence combined with Lemma \ref{lem:numofphase}, the total number of iterations of the algorithm is at most $\mathcal {O}(\frac{\log^{13}n\log m}{\ve^{13}}).$

\end{theorem}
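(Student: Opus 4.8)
The plan is to bound the number of iterations within a single phase (fixed value of $k$), and then multiply by the number of phases from Lemma~\ref{lem:numofphase}. Within a phase, the key progress measure is $N_{(1+\ve_0)^{\thr}}(\Phi^*(Y_t))$, the sum of the ``large'' eigenvalues (above threshold) of $\Phi^*(Y_t)$. Each iteration adds $\lambda_t \Pi_t$ to $X_t$, which by Lemma~\ref{lem:thr} decreases $\tr Y_t$ substantially; the goal is to convert this into a decrease of the large-eigenvalue mass. First I would show that in a single iteration, the large-eigenvalue mass $N_{(1+\ve_0)^{\thr}}(\Phi^*(Y_{t+1}))$ drops by a multiplicative factor bounded away from $1$ (something like $1 - \poly(\ve_1)$), using the separation condition enforced in Step 3(c)--3(d): the threshold $\thr$ is chosen so that the mass does not concentrate too close to the threshold, and the scaling $\lambda_t$ is chosen (Step 3(d), parts 1 and 2) so that a constant fraction of the relevant mass sits in the eigenspace $P^{\geq}_{\lambda_t}$ where the update genuinely shrinks eigenvalues. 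Since within a phase $N_{(1+\ve_0)^{\thr}}(\Phi^*(Y_t))$ ranges between roughly $(1+\ve_0)^k / n$ and $n (1+\ve_0)^k$ (norm bounds from Lemma~\ref{lem:normphi} and the fact that we are above threshold), a multiplicative decrease of factor $1 - \poly(\ve_1)$ per step gives $\O(\frac{\log n}{\poly(\ve_1)})$ iterations before the phase must end (i.e. $k$ decreases or $\tr Y_t$ drops below $m^{-1/\ve}$); tracking the exponents carefully yields $\O(\frac{\log^2 n}{\ve_1^9 \ve})$.

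The main obstacle — and where the non-commutativity really bites — is establishing that the large-eigenvalue mass actually decreases, rather than merely relocating. When $Y_t$ is updated, the new large-eigenvalue eigenspace $W_2$ of $\Phi^*(Y_{t+1})$ need not align with the old one $W_1$ (eigenspace of $\Phi^*(Y_t)$ above threshold), so a naive argument only controls $\Pi_1 \Phi^*(Y_{t+1}) \Pi_1$ and not $N_{(1+\ve_0)^{\thr}}(\Phi^*(Y_{t+1})) = \tr \Pi_2 \Phi^*(Y_{t+1}) \Pi_2$. To handle this I would invoke Fact~\ref{lem:projectordecomposition} (Jordan's lemma) to decompose the space into one- and two-dimensional subspaces jointly invariant under $\Pi_1$ and $\Pi_2$; inside each two-dimensional block both projectors are rank-one, and one can parametrize them by a single angle. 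Working block-by-block, one shows that the mass of $\Phi^*(Y_{t+1})$ that $\Pi_2$ can ``capture'' above the threshold is controlled by the mass $\Pi_1$ captured minus the genuine decrease forced by the update on the $P^{\geq}_{\lambda_t}$ part, with the cross-block leakage bounded using Facts~\ref{thm:lownerorder} and~\ref{thm:majorization} (Löwner monotonicity and majorization of $\sum_i P_i A P_i$). The separation condition from Step 3(c) is exactly what prevents the adversarial case where $W_2$ picks up a large amount of mass that was sitting just below the old threshold.

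Once the per-iteration multiplicative decrease is in hand, the rest is bookkeeping: combine the within-phase bound $\O(\frac{\log^2 n}{\ve_1^9 \ve})$ with the number of phases $k_s - k_f = \O(\frac{\log m \log^2 n}{\ve^3})$ from Lemma~\ref{lem:numofphase}, substitute $\ve_1 = \frac{3\ve}{\ln n}$ (so $\frac{1}{\ve_1^9} = \O(\frac{\log^9 n}{\ve^9})$), and multiply:
\begin{align*}
\O\!\left(\frac{\log^2 n}{\ve_1^9 \ve}\right) \cdot \O\!\left(\frac{\log m \log^2 n}{\ve^3}\right) = \O\!\left(\frac{\log^{13} n \log m}{\ve^{13}}\right).
\end{align*}
Together with the claimed $\polylog(n)\cdot\polylog(m)\cdot\poly(\frac1\ve)$ cost per iteration (eigendecomposition, matrix exponentiation, sorting, and binary search all parallelize to $\polylog$ depth), this gives the stated total running time. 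I expect the only genuinely delicate point to be the two-dimensional block analysis bounding how much large-eigenvalue mass can survive the reshuffling of eigenspaces; everything else is Löwner-order manipulation of the kind already seen in Lemmas~\ref{lem:normphi} and~\ref{lem:thr}.
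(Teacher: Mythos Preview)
Your overall architecture is right and matches the paper: the progress measure is the large-eigenvalue mass $N_{(1+\ve_0)^s}(\Phi^*(Y_t))$, the non-commutative difficulty is handled by Jordan's lemma (Fact~\ref{lem:projectordecomposition}) comparing the large-eigenvalue eigenspaces before and after the update, and the separation condition in Step~3(c) is exactly what rules out the bad case. The paper packages the Jordan/two-block analysis into Lemma~\ref{lem:mainlemma}, stated contrapositively: if conditions (1)--(3) hold (in particular, the large-eigenvalue mass of $B$ did \emph{not} drop by the factor $1-\ve_1^9$), then condition (4) holds (the mass just below threshold exceeds $(1+\tfrac{2}{5}\ve)N(A+B)$), which is precisely what Step~3(c) forbids. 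So whenever $\thr=s$ the mass above $(1+\ve_0)^s$ must shrink by $(1-\ve_1^9)$.

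The one genuine gap in your outline is the treatment of the varying threshold. You write as if $N_{(1+\ve_0)^{\thr}}(\Phi^*(Y_t))$ drops by a fixed multiplicative factor at \emph{every} iteration of the phase, and then say ``tracking the exponents carefully yields $\O(\frac{\log^2 n}{\ve_1^9\ve})$''. But $\thr$ changes from step to step (it is recomputed each time at Step~3(c)), and if $\thr$ drops then the quantity $N_{(1+\ve_0)^{\thr}}$ can \emph{increase} (a lower threshold captures more eigenvalues). So the progress measure with moving $\thr$ is not monotone, and your argument as written would only give $\O(\log n/\ve_1^9)$, not the stated bound. The missing ingredient is a pigeonhole over the possible values of $\thr$: within a phase, $\thr$ always lies in $\{k,\,k-1,\,\ldots,\,k-\tfrac{3\ln n}{\ve}\}$ (this is the content of Lemma~\ref{lem:normphi}), so if the phase lasts $\tfrac{6\log^2 n}{\ve_1^9\ve}$ iterations then some fixed value $s$ is hit at least $\tfrac{2\log n}{\ve_1^9}$ times. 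Now track the single quantity $N_{(1+\ve_0)^s}(\Phi^*(Y_t))$: it never increases (since $Y_{t+1}\leq Y_t$ in L\"owner order, Fact~\ref{thm:lownerorder}), and each time $\thr=s$ it drops by $(1-\ve_1^9)$ via Lemma~\ref{lem:mainlemma}. Starting from at most $n(1+\ve_1)(1+\ve_0)^s$, after $\tfrac{2\log n}{\ve_1^9}$ such drops it falls below $(1+\ve_0)^s$, so no eigenvalue above $(1+\ve_0)^k$ survives and $k$ must decrease. That pigeonhole is exactly the source of the extra $\tfrac{\log n}{\ve}$ factor you were missing. With this fix, your final multiplication is correct.
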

\begin{proof}
Fix $k$. Assume that the Algorithm has reached step $3(d)$ for this fixed $k$ , $\frac{6 \log^2 n}{\ve_1^9\ve}$ times. As argued in the proof of Lemma~\ref{lem:thr}, whenever Algorithm reaches step $3(d)$, $\thr \geq k - \frac{3 \ln n}{\ve} $. Thus there exists a value $s$ between $k$ and $k - \frac{3 \ln n}{\ve}$ such that $\thr = s$ at least $\frac{ 2 \log n}{\ve_1^9}$ times.

From Lemma~\ref{lem:thr} we get that the sum of the eigenvalues above
$(1+\ve_0)^s$, is at most $n(1+\ve_1)(1+\ve_0)^s$ at the beginning of this phase. Whenever $\thr \neq s$ in this phase, using Fact \ref{thm:lownerorder},   we conclude that the eigenvalues of $\Phi^*(Y_t)$ above $(1+\ve_0)^s$ do not increase. 
Whenever $\thr = s$ in this phase, using Lemma
\ref{lem:mainlemma}, we conclude that the eigenvalues of $\Phi^*(Y_t)$ above $(1+\ve_0)^s$ reduce by a factor of $(1-\ve_1^9)$. This can be seen by letting $A$ in Lemma~\ref{lem:mainlemma} to be  $\frac{1 - \exp(-2\sqrt{\ve})}{(1+\ve_0)^s } \cdot \Phi^*( P^\geq_{\lambda_t} Y_t  P^\geq_{\lambda_t})$ and $B$ to be $\frac{1}{(1+\ve_0)^s} \Phi^*(Y^t) - A$. Now condition $3(d)(1.)$ of the Algorithm gives condition $(2)$ of Lemma~\ref{lem:mainlemma}. Condition $(1)$ of Lemma~\ref{lem:mainlemma} can also be seen to be satisfied (using Lemma~\ref{lem:normphi}) and condition $(4)$ of Lemma~\ref{lem:mainlemma} is false due to condition $3(c)$ of the Algorithm. This implies condition $(3)$  of Lemma~\ref{lem:mainlemma} must also be false which gives us the desired conclusion.

Therefore  the eigenvalues of $\Phi^*(Y_t)$ above $(1+\ve_0)^s$ (in particular above $(1+\ve_0)^k$) will
vanish before $\thr = s$, $\frac{ 2 \log n}{\ve_1^9}$ times. Hence $k$ must decrease before the Algorithm has reached step $3(d)$,  $\frac{6 \log^2 n}{\ve_1^9\ve}$ times.
\end{proof}
Following is a key lemma. It states that for two positive semidefinite matrices $A,B$, if $A$ has good weight in the large (above $1$) eigenvalues space of $A+B$ and if the sum of large (above $1$) eigenvalues of $B$ is pretty much the same as for $A+B$, then the sum of eigenvalues of $A+B$, slightly below $1$ should be a constant fraction larger than the sum above $1$. 
\begin{lemma}\label{lem:mainlemma}
Let $\ve'=\frac{\ve_0}{1+\ve_0}$. Let
$A,B$ be two $n \times n$ positive semidefinite matrices satisfying
\begin{eqnarray}
\|A+B\|\leq1+\ve_1 \text{ and }\ \|B\|\geq 1, \label{normab}\\
\tr \Pi^{A+B}A \geq\ve \tr \Pi^{A+B}(A+B), \text{ and }  \label{alarge}\\
\tr \Pi^BB \geq(1-\ve_1^9) \tr\Pi^{A+B}(A+B).\label{bdoesnotdecrease}
\end{eqnarray}
Then
\begin{equation}\label{belowthrlarge}
N_{1-\ve'}(A+B)>(1+\frac{2}{5}\ve)N(A+B).
\end{equation}
\end{lemma}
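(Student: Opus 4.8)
The plan is to exploit the joint block decomposition of the underlying space with respect to the two projectors $\Pi^{A+B}$ and $\Pi^B$ guaranteed by Fact~\ref{lem:projectordecomposition}. Write $\Pi = \Pi^{A+B}$ and $\Delta = \Pi^B$; decompose $\complex^n = \bigoplus_j V_j$ into one- and two-dimensional subspaces invariant under both $\Pi$ and $\Delta$, inside each of which (in the two-dimensional case) $\Pi$ and $\Delta$ are rank-one projectors. The key idea is that $A+B$ "lives mostly on" the range of $\Pi$ while $B$ "lives mostly on" the range of $\Delta$, and hypothesis~\eqref{bdoesnotdecrease} forces $\Delta$ to capture almost all of the large mass of $A+B$. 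In a two-dimensional block where the angle between the two lines is $\theta_j$, the part of $B$ that "leaks out" of the $\Pi$-subspace is of order $\sin^2\theta_j$ times the local eigenvalue; since the total such leakage is at most $\ve_1^9 N(A+B)$ by~\eqref{bdoesnotdecrease}, the angles must be small on average, weighted by eigenvalue mass. That smallness is what I will convert into a statement about eigenvalues of $A+B$ just below $1$.

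The main steps, in order. First, reduce to a block-diagonal computation: using Fact~\ref{thm:majorization} (majorization under pinching by the mutually orthogonal projections onto the $V_j$'s), it suffices to prove the eigenvalue-counting inequality~\eqref{belowthrlarge} for the pinched matrix $\sum_j (\Pi_{V_j})(A+B)(\Pi_{V_j})$, since pinching can only spread the spectrum (so $N_{1-\ve'}$ of the pinched matrix is at most that of $A+B$, while $N$ of it is at least... — actually one has to be careful about the direction; I'll instead argue directly on each block and sum, using that $A, B \ge 0$ so the block pieces are PSD and the relevant traces are additive over blocks). Second, analyze a single two-dimensional block $V_j$: parametrize $\Pi$ as the projector onto $\ket{0}$ and $\Delta$ as the projector onto $\cos\theta_j\ket{0} + \sin\theta_j\ket{1}$; then $\Pi^{A+B}(A+B)$ restricted here contributes its top eigenvalue $a_j$ to $N(A+B)$ provided $a_j \ge 1$, while $\Delta B \Delta$ contributes $b_j$ to $\tr\Pi^B B$. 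Third, combine the three hypotheses at the block level: \eqref{normab} bounds all local eigenvalues by $1+\ve_1$ and forces $\|B\|\ge 1$ so some block has $b_j \ge 1$; \eqref{alarge} says $A$ carries an $\ve$-fraction of the mass on $\mathrm{range}(\Pi)$; \eqref{bdoesnotdecrease} says $\sum_j b_j \ge (1-\ve_1^9)\sum_{j:\,a_j\ge1} a_j$. Fourth, the punchline: if the conclusion~\eqref{belowthrlarge} \emph{failed}, i.e. $N_{1-\ve'}(A+B) \le (1+\frac{2}{5}\ve)N(A+B)$, then the mass of $A+B$ sitting in the window $[1-\ve',1)$ is small; but on a block where $\Delta$ is nearly aligned with $\Pi$ (small $\theta_j$) and $B$ carries most of the local mass while $A$ carries an $\ve$-fraction, a direct $2\times2$ computation shows $A+B$ must have an eigenvalue noticeably below $\|A+B\|$ — in the window — contributing enough mass to contradict the assumed upper bound, once one checks the constants work out with $\ve' = \ve_0/(1+\ve_0)$ and the exponent $9$ on $\ve_1$ giving enough slack.

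Concretely, for the per-block estimate I would write $A|_{V_j} = \begin{pmatrix} p & q \\ \bar q & r\end{pmatrix}$ in the $\{\ket0,\ket1\}$ basis and $B|_{V_j}$ similarly, using that $\Delta B\Delta$ proportional to the rank-one projector on the $\theta_j$ direction means $B|_{V_j}$ has a specific form up to a (small, controlled by~\eqref{bdoesnotdecrease}) error from $B(\I-\Delta)$; then $\mathrm{Tr}(\Pi(A+B))$, $\det(A+B|_{V_j})$, and $\mathrm{Tr}(A+B|_{V_j})$ are all computable, and the two eigenvalues of $A+B|_{V_j}$ are $\frac{1}{2}\mathrm{Tr} \pm \frac12\sqrt{\mathrm{Tr}^2 - 4\det}$; the gap between the larger one and $1$ is then lower-bounded in terms of $\ve$ and $\theta_j$, and one shows this gap lands the smaller eigenvalue (or the larger one of a nearby block) into $[1-\ve',1)$.

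The main obstacle I anticipate is the bookkeeping in the contradiction step: translating "total angular leakage $\le \ve_1^9 N(A+B)$" plus "$A$ has an $\ve$-fraction of the $\Pi$-mass" into a \emph{quantitative} lower bound on $N_{1-\ve'}(A+B) - N(A+B)$ that beats $\frac{2}{5}\ve\, N(A+B)$, while simultaneously keeping every local eigenvalue below the $1+\ve_1$ ceiling from~\eqref{normab}. The delicate point is that the "leakage" and the "downward push on eigenvalues" are both second-order in $\theta_j$, so one must argue that the blocks where $A$ is $\ve$-heavy are precisely the blocks where $\theta_j$ cannot be too small \emph{relative to the eigenvalue gap we need} — essentially, $A$ being heavy on $\mathrm{range}(\Pi)$ while $B$ dominates the local mass forces the top eigenvector of $A+B$ to tilt away from both $\mathrm{range}(\Pi)$ and $\mathrm{range}(\Delta)$ by a controlled amount, which is what drops an eigenvalue into the target window. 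Getting the constant $\frac{2}{5}$ (rather than something smaller) out of this will require using the generous exponent $9$ on $\ve_1$ and the fact that $\ve_1 \ll \sqrt\ve \ll \ve^{1/2}$, so the leakage term is negligible against the $\ve$-scale effects.
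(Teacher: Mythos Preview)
Your framework is the same as the paper's: Jordan decomposition of $\complex^n$ with respect to $\Pi^{A+B}$ and $\Pi^B$, a per-block $2\times 2$ analysis, and majorization to reassemble. But there is a genuine gap at the combinatorial step, and a second gap in the final assembly.

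\textbf{The missing averaging step.} The global hypotheses \eqref{alarge} and \eqref{bdoesnotdecrease} are averages over blocks; you never explain how to extract \emph{many} blocks on which both local versions hold simultaneously. The paper does this explicitly (its Lemma~\ref{lem:setij}): from \eqref{alarge} one gets $\sum_i \tr\Pi^{A_i+B_i}B_i \le (1-\ve)\sum_i N(A_i+B_i)$, and a Markov argument gives a set $I$ of size $\gtrsim \ve k$ where $\tr\Pi^{A_i+B_i}B_i \le (1-\ve^2)N(A_i+B_i)$; from \eqref{bdoesnotdecrease} a second Markov argument gives a set $J$ of size $\ge (1-O(\ve_1))k$ where $\tr\Pi^{B_i}B_i \ge (1-\ve_1^8)N(A_i+B_i)$; hence $|I\cap J|>\tfrac{99}{100}\ve k$. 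Your ``angular leakage'' heuristic (leakage $\sim \sin^2\theta_j$ times local eigenvalue) does not substitute for this: it neither identifies a definite fraction of good blocks nor localizes the $A$-heaviness condition \eqref{alarge} to those blocks. Without this step you cannot invoke the $2\times 2$ computation on enough blocks to produce the required $\tfrac{2}{5}\ve\, N(A+B)$ of extra mass.

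\textbf{The $2\times 2$ conclusion and the majorization direction.} What the $2\times 2$ lemma needs to say is that on each block in $I\cap J$ the \emph{second} eigenvalue of $(A+B)_i$ exceeds $1-\tfrac{1}{9}\ve_1^3$ (in particular $>1-\ve'$); your description targets ``the gap between the larger one and $1$'', which is the wrong quantity. For the assembly, your hesitation about the majorization direction is warranted: pinching only gives $\lambda^\downarrow(A+B)\succeq \lambda^\downarrow\bigl(\sum_i (A_i+B_i)\bigr)$, and $N_{1-\ve'}$ is not monotone under majorization in general. The paper resolves this by first proving the \emph{exact equality} $\sum_{j\le k}\lambda_j(A+B)=\sum_{j\le k}\lambda_j\bigl(\sum_i(A_i+B_i)\bigr)=\tr\Pi^{A+B}(A+B)$ (both sides have exactly $k$ eigenvalues $\ge 1$, and their sum equals the block-decomposed trace), then subtracting to obtain $\sum_{k<j\le j_0}\lambda_j(A+B)\ge \sum_{k<j\le j_0}\lambda_j\bigl(\sum_i(A_i+B_i)\bigr)$ for $j_0=k+\tfrac{99}{100}\ve k$. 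That equality is what makes majorization go the right way, and it is absent from your outline.
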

\begin{proof}
In order to prove this Lemma we will need to first show a few other Lemmas. 
By Fact \ref{lem:projectordecomposition}, $\Pi^B$ and $\Pi^{A+B}$
decompose the underlying space $V$ as follows,
$$V=\left(\bigoplus_{i=1}^kV_i \right) \bigoplus W .$$
Above for each $i \in [k]$, $V_i$ is either one-dimensional or two-dimensional
subspace, invariant for both $\Pi^B$ and $\Pi^{A+B}$ and inside $V_i$ at least one of $\Pi^B$ and $\Pi^{A+B}$
survives. $W$ is the subspace where both $\Pi^B$ and $\Pi^{A+B}$
vanish. We identify the subspace $V_i$ and the projector onto
itself. For any matrix $M$, define $M_i$  to be $V_i M V_i$. We can see that both the projectors $\Pi^B$ and $\Pi^{A+B}$ are decomposed
into the direct sum of one-dimensional projectors as follows.
$$\Pi^B=\bigoplus_{i=1}^k\Pi_i^B \quad \text{and}\quad \Pi^{A+B}=\bigoplus_{i=1}^k\Pi_i^{A+B}.$$
\begin{lemma}\label{lem:pi}
For any $i\in[k]$, $\Pi^{B_i}=\Pi^B_i$ and $\Pi^{A+B}_i=\Pi^{A_i+B_i}$.
That is, the eigenspace of $B_i$ with eigenvalues at least $1$, is
exactly the restriction of $\Pi^B$ to $V_i$ and similarly for $A_i+B_i$.
\end{lemma}
\begin{proof}
We prove $\Pi^{B_i}=\Pi^B_i$ and the other equality follows
similarly.  If $\dim V_i=1$, i.e. $V_i= \mathrm{span} \{|v\rangle\}$, then
either $\Pi^B|v\rangle=|v\rangle$ or $\Pi^B|v\rangle=0$. For the
first case, $\Pi^B_i=|v\rangle\langle v|$, and $B_i=\langle
v|B|v\rangle|v\rangle\langle v|$ and $\langle v|B|v\rangle\geq 1$,
which means $\Pi^{B_i}=|v\rangle\langle v|$. For the second case,
$\Pi^B_i=0$, $\langle v|B|v\rangle<1$, i.e. $\Pi^{B_i}=0$.

For the case $\dim V_i=2$, 
\begin{eqnarray*}
B_i&=&V_iBV_i=V_i(\Pi^BB\Pi^B+(\I-\Pi^B)B(\I-\Pi^B))V_i\\
&=&V_i(\bigoplus_j\Pi^B_j)B(\bigoplus_j\Pi^B_j)V_i+V_i((W\oplus \bigoplus_j(V_j-\Pi^B_j))B(W\oplus\bigoplus_j(V_j-\Pi^B_j)))V_i\\
&=&\Pi^B_iB\Pi^B_i+(V_i-\Pi^B_i)B(V_i-\Pi^B_i).
\end{eqnarray*}
Let $\Pi^B_i=|v_1\rangle\langle v_1|$ and
$V_i-\Pi^B_i=|v_0\rangle\langle v_0|$, then
\begin{equation}\label{eqn:specofB}
B_i=\langle v_1|B|v_1\rangle|v_1\rangle\langle v_1|+\langle
v_0|B|v_0\rangle|v_0\rangle\langle v_0|
\end{equation}
is the spectral decomposition of $B_i$. As
$\Pi^B|v_1\rangle=\Pi^B_i|v_1\rangle=|v_1\rangle$ and
$\Pi^B|v_0\rangle=\Pi^B_i|v_0\rangle=0$, we have $\langle
v_1|B|v_1\rangle\geq 1$ and $\langle v_0|B|v_0\rangle<1$, and hence
$\Pi^{B_i}=|v_1\rangle\langle v_1|.$
\end{proof}

\begin{lemma}\label{lem:pbbdecompose}
\begin{equation}\label{eqn:pbbdecompose}
\tr \Pi^BB =\sum_{i=1}^k\tr \Pi^{B_i}B_i ,
\end{equation}
\begin{equation}\label{eqn:pabbdecmpose}
\tr \Pi^{A+B}B =\sum_{i=1}^k \tr \Pi^{A_i+B_i}B_i , \text{ and }
\end{equation}
\begin{equation}\label{eqn:pababdecomposition}
\tr \Pi^{A+B}(A+B) =\sum_{i=1}^k\tr \Pi^{A_i+B_i}(A_i+B_i)
\end{equation}
Then using Eq.\eqref{alarge} and Eq.\eqref{bdoesnotdecrease} we get,
\begin{eqnarray}
\sum_{i=1}^k\tr \Pi^{A_i+B_i}B_i \leq(1-\ve)\sum_{i=1}^k\tr \Pi^{A_i + B_i}(A_i+B_i) .\label{eqn:aalarge}\\
\sum_{i=1}^k\tr \Pi^{B_i}B_i \geq(1-\ve_1^9)\sum_{i=1}^k\tr \Pi^{A_i+B_i}(A_i+B_i) .\label{eqn:bdoesnotdecrese}
\end{eqnarray}
\end{lemma}
\begin{proof}
We prove (\ref{eqn:pbbdecompose}) and (\ref{eqn:pabbdecmpose}) and
(\ref{eqn:pababdecomposition}) follow similarly.
\begin{eqnarray*}
\tr \Pi^BB &=&\sum_{i=1}^k\tr \Pi^B_iB =\sum_{i=1}^k\tr V_i\Pi^B_iV_i B = \sum_{i=1}^k\tr \Pi^B_iV_i BV_i \\
&=&\sum_{i=1}^k\tr \Pi^B_i B_i =\sum_{i=1}^k\tr \Pi^{B_i} B_i .
\end{eqnarray*}
\end{proof}

\bigskip

\noindent {\bf Remarks:}

\begin{enumerate}
\item In any one-dimensional subspace
$V_i=\mathrm{span}\{|v\rangle\}$ in the decomposition of $V$ as above, if $\Pi^{A+B}|v\rangle=0$, then
$\langle v|(A+B)|v\rangle<1$, which implies $\langle
v|B|v\rangle < 1$, that is  $\Pi^{B}|v\rangle=0$. But this
contradicts the fact that at least one of $\Pi^B$ and $\Pi^{A+B}$ does not vanish in $V_i$. Thus $\Pi^{A+B}$ never
vanishes in any of $V_i$. Therefore for all $i \in [k]$ we have $\tr \Pi^{A_i+B_i}(A_i+B_i) = \tr \Pi^{A+B}_i(A_i+B_i) \geq
1$.

\item From (\ref{normab}), for all $i \in [k]$, $\tr \Pi^{A_i+B_i}(A_i+B_i) \leq1+\ve_1$. Combined with
(\ref{eqn:pababdecomposition}), we have
$$k\leq N(A+B)\leq k(1+\ve_1).$$
\end{enumerate}

\begin{lemma}\label{lem:setij}
Let
$$I=\{i\in[k] : \tr \Pi^{A_i+B_i} B_i  \leq(1-\ve^2)\tr \Pi^{A_i+B_i}(A_i+B_i) \},$$
and
$$J=\{i\in[k] : \tr \Pi^{B_i}B_i \geq (1-\ve_1^8)\tr \Pi^{A_i+B_i}(A_i+B_i) \}.$$
Then
$$|I\cap J|>\frac{99}{100}\ve k.$$
\end{lemma}

\begin{proof}
From (\ref{eqn:aalarge}),
\begin{eqnarray*}
&&(1-\ve^2)\sum_{i\not\in
I}N(A_i+B_i)\leq(1-\ve)\sum_{i=1}^kN(A_i+B_i)\\
&\Rightarrow&(\ve-\ve^2)\sum_{i\not\in
I}N(A_i+B_i)\leq(1-\ve)\sum_{i\in
I}N(A_i+B_i)\\
&\Rightarrow&\ve(k-|I|)\leq(1+\ve_1)|I| \quad \mbox{(from Remarks 1. and 2.)}\\
&\Rightarrow&|I|\geq\frac{\ve}{1+\ve_1+\ve}k.
\end{eqnarray*}
From (\ref{eqn:bdoesnotdecrese}) (since for all $i\in [k], N(A_i + B_i) \geq N(B_i)$),
\begin{eqnarray*}
&&\sum_{i\in J}N(A_i+B_i)+(1-\ve_1^8)\sum_{i\not\in
J}N(A_i+B_i)\geq(1-\ve_1^9)\sum_{i=1}^kN(A_i+B_i)\\
&\Rightarrow&\ve_1\sum_{i\in J}N(A_i+B_i)\geq(1-\ve_1)\sum_{i\not\in
J}N(A_i+B_i)\\
&\Rightarrow&\ve_1(1+\ve_1)|J|\geq(1-\ve_1)(k-|J|)  \quad \mbox{(from Remarks 1. and 2.)}\\
&\Rightarrow&|J|\geq\frac{1-\ve_1}{1+\ve_1^2}k.
\end{eqnarray*}
Thus
$$|I\cap J|\geq \left(\frac{\ve}{1+\ve_1+\ve}+\frac{1-\ve_1}{1+\ve_1^2}-1 \right)k>\frac{99}{100}\ve k.$$
\end{proof}

\bigskip

\noindent\textbf{Remark:}

\begin{itemize}
\item[3.] Note that for any $i\in I\cap J$, $\dim V_i=2$. Otherwise, either
$\Pi^{A_i+B_i}=\Pi^{B_i}$ or $\Pi^{B_i}=0$ and neither of these  can
happen in $I\cap J$ (from definitions of $I$ and $J$).
\end{itemize}

The following lemma states that for each $i \in I \cap J$, the second eigenvalue of $A_i + B_i$ is close to $1$. Its proof involves some direct calculations and due to space constraint we move it to Appendix~\ref{sec:deferredproofs}.
\begin{lemma}\label{lem:2by2matrix}
Let $P$ and $Q$ be $2 \times 2$ positive semidefinite matrices satisfying
\begin{eqnarray}
\|Q\|\geq 1, \quad \|P+Q\|\leq1+\ve_1 , \quad  \lambda_2(P+Q)<1 ,
\label{eqn:2by2norm}\\
\tr \Pi^{P+Q}P \geq\ve^2\tr \Pi^{P+Q}(P+Q)  \quad  \text{and }
\label{eqn:2by2alarge}\\
\tr \Pi^QQ \geq(1-\ve_1^8)\tr \Pi^{P+Q}(P+Q)  \enspace .\label{eqn:2by2bdoesnotdecrese}
\end{eqnarray}
Then $\lambda_2(P+Q)>1-\frac{1}{9}\ve_1^3.$
\end{lemma}

\bigskip

\noindent We can finally prove  Lemma \ref{lem:mainlemma}. By Fact
\ref{thm:lownerorder},
$\lambda^{\downarrow}(A+B)\succeq\lambda^{\downarrow}(\sum_i{A_i+B_i})$.
Let $j_1=\max\{j: \lambda_j(A+B)\geq 1\}$,
$j_2=\max\{j:\lambda_j(\sum_i(A_i+B_i))\geq 1\},$  and
$j_0=j_1+\frac{99}{100}\ve k$. Then
$$\sum_{j\leq j_0}\lambda_j(A+B)\geq\sum_{j\leq j_0}\lambda_j\(\sum_i(A_i+B_i)\).$$

According to the decomposition in Fact
\ref{lem:projectordecomposition}, Lemma \ref{lem:pbbdecompose} and
the remarks below it, $j_1=j_2=k$ and
$$\sum_{j\leq j_1}\lambda_j(A+B)=\tr \Pi^{A+B}(A+B), \quad \text{and}$$
$$\sum_{j\leq
j_2}\lambda_j\(\sum_i(A_i+B_i)\)=\sum_i\tr \Pi^{A_i+B_i}(A_i+B_i).$$
The RHS of both the equations are equal by Lemma \ref{lem:pbbdecompose}. Therefore,
$$\sum_{k<j\leq j_0}\lambda_j(A+B)\geq\sum_{k<j\leq j_0}\lambda_j\(\sum_i(A_i+B_i)\).$$
By Lemma \ref{lem:setij}
and Lemma \ref{lem:2by2matrix},
$$\sum_{k<j\leq j_0}\lambda_j\(\sum_i(A_i+B_i)\)\geq\frac{99}{100}\ve k\(1-\frac{1}{9}\ve_1^3\).$$
Let $x=N_{1-\ve'}(A+B)-N(A+B)$, then
$$\sum_{k<j\leq j_0}\lambda_j(A+B)\leq x+\(\frac{99}{100}\ve k-x\)(1-\ve').$$
Therefore from previous three inequalities,
$$\frac{99}{100}\ve k\(1-\frac{1}{9}\ve_1^3\)\leq x+\(\frac{99}{100}\ve k-x\)(1-\ve'),$$
which implies
$$x\geq\frac{99}{100}\ve k\(1-\frac{\ve_1^3}{9\ve'}\).$$
Note that $\ve_1^3\ll\ve'$, therefore from Remark 2.,
\begin{align*}
N_{1-\ve'}(A+B)&\geq k+\frac{99}{100}\ve k\(1-\frac{\ve_1^3}{9\ve'}\)>\(1+\frac{1}{2}\ve\)k\\
&>\(1+\frac{2}{5}\ve\)(1+\ve_1)k\geq\(1+\frac{2}{5}\ve\)N(A+B).
\end{align*}
\end{proof}

\subsubsection*{Acknowledgement}
Penghui Yao would like to thank Attila Pereszl$\acute{e}$nyi and Huangjun Zhu for helpful discussions.

\bibliographystyle{plain}

\bibliography{fastspd}

\appendix

\section{Transforming to special form}
\label{sec:transform}

Let us consider an instance of a positive semidefinite program as follows.

\begin{center}
  \begin{minipage}{2in}\vspace{-10mm}
    \centerline{\underline{Primal problem P}}\vspace{-4mm}
    \begin{align*}
      \text{minimize:}\quad & \tr C X\\
      \text{subject to:}\quad & \forall i\in  [m] : \tr A_i X \geq b_i,\\
      & X \geq 0.
    \end{align*}
  \end{minipage}
  \hspace*{25mm}
  \begin{minipage}{2in}
    \centerline{\underline{Dual problem D}}\vspace{-4mm}
    \begin{align*}
      \text{maximize:}\quad & \sum_{i=1}^m b_i y_i \\
      \text{subject to:}\quad & \sum_{i=1}^m y_i \cdot A_i  \leq C,\\
      & \forall i \in [m] : y_i \geq 0.
    \end{align*}
  \end{minipage}
\end{center}
Above $C, A_1, \ldots, A_m$ are $n \times n$ positive semidefinite matrices and $b_1, \ldots, b_m$ are non-negative.  Let us assume that conditions for strong duality are satisfied and optimum value for ${P}$, denoted $\opt({P})$, equals the optimum value for ${D}$, denoted $\opt({D})$. Assume w.l.o.g $m \geq n$ (by repeating the first constrain in $P$ if necessary).

We show how to transform the primal problem to the special form and a similar
transformation can be applied to dual problem. First observe that if for some $i$,
$b_i=0$, the corresponding constraint in primal problem is trivial and can be removed.
Similarly if for some $i$, the support of $A_i$ is not contained in the support of $C$, then
$y_i$ must be $0$ and can be removed. Therefore we can assume w.l.o.g. that for all $i, b_i > 0$ and the support of $A_i$ is contained in the support of $C$. Hence w.l.o.g we can take the support of $C$ as the
whole space, in other words, $C$ is invertible. For all $i \in [m],$ define
$A_i' \defeq \frac{C^{-1/2}A_iC^{-1/2}}{b_i}$. Consider the normalized Primal problem.

\begin{center}
  \begin{minipage}{2in}
    \centerline{\underline{Normalized Primal problem P'}}\vspace{-4mm}
    \begin{align*}
      \text{minimize:}\quad & \tr X'\\
      \text{subject to:}\quad & \forall i\in  [m] : \tr A_i' X' \geq 1,\\
      & X' \geq 0.
    \end{align*}
  \end{minipage}
\end{center}

\begin{claim} If $X$ is a feasible solution to $P$, then $C^{1/2}XC^{1/2}$
is a feasible solution to $P'$ with the same objective value. Similarly if $X'$ is a feasible solution to $P'$, then
$C^{-1/2}X'C^{-1/2}$ is a feasible solution to $P$ with the same
objective value. Hence $\opt(P) = \opt(P')$.
\end{claim}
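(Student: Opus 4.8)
The plan is to check that the linear map $X \mapsto C^{1/2} X C^{1/2}$ carries the feasible set of $P$ bijectively onto the feasible set of $P'$ while preserving the objective value, with inverse $X' \mapsto C^{-1/2} X' C^{-1/2}$; the latter is well defined because the normalization steps preceding the claim let us assume that $C$ is invertible on the whole space (having discarded the constraints with $b_i = 0$ and the directions outside the support of $C$). Once the bijection and the objective-preservation are established, $\opt(P) = \opt(P')$ is immediate.

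For the forward direction I would take $X$ feasible for $P$ and set $X' = C^{1/2} X C^{1/2}$. Then $X' \geq 0$, since conjugating a positive semidefinite matrix by the Hermitian matrix $C^{1/2}$ preserves positive semidefiniteness; cyclicity of the trace gives $\tr X' = \tr C^{1/2} X C^{1/2} = \tr C X$, matching the two objectives; and for each $i \in [m]$, substituting $A_i' = C^{-1/2} A_i C^{-1/2}/b_i$ and using cyclicity again, $\tr A_i' X' = \frac{1}{b_i}\tr C^{-1/2} A_i C^{-1/2} C^{1/2} X C^{1/2} = \frac{1}{b_i}\tr A_i X \geq 1$, so $X'$ is feasible for $P'$. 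The reverse direction is the mirror image: for $X'$ feasible for $P'$, the matrix $X = C^{-1/2} X' C^{-1/2}$ is positive semidefinite, $\tr C X = \tr X'$, and $\tr A_i X = \tr C^{-1/2} A_i C^{-1/2} X' = b_i \tr A_i' X' \geq b_i$, so $X$ is feasible for $P$.

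Since these two maps are mutually inverse, every feasible point of either problem corresponds to a feasible point of the other with the same objective value, so the two minima coincide and $\opt(P) = \opt(P')$. There is essentially no obstacle here --- the entire argument is trace manipulation --- and the only point that needs care is the legitimacy of $C^{-1/2}$, which is precisely what the pre-processing before the claim secures; an identical substitution applied to the dual $D$ transforms it to the analogous form, after which the remaining requirements of the special form ($\norm{A_i}\le 1$ and minimum nonzero eigenvalue at least $1/\gamma$) are arranged by further routine rescalings.
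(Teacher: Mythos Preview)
Your argument is correct and is exactly the routine verification the paper has in mind; the paper's own proof reads simply ``Easily verified.'' You have filled in the details precisely as intended (positivity under conjugation, cyclicity of trace for both the objective and the constraints), so there is nothing to add.
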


\begin{proof}
Easily verified.
\end{proof}
The next step to transforming the problem is to limit the range of
eigenvalues of $A_i'$s. Let $\beta=\min_i{\|A_i'\|}$.

\begin{claim}\label{claim:opt}
$\frac{1}{\beta}\leq \opt(P')\leq\frac{m}{\beta}.$
\end{claim}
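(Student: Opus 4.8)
The plan is to establish the two bounds independently: the lower bound $\opt(P') \geq 1/\beta$ by a direct, duality-free argument using feasibility of $P'$, and the upper bound $\opt(P') \leq m/\beta$ by writing down an explicit feasible solution of $P'$ and computing its trace.

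For the lower bound I would first fix an index $i$ with $\norm{A_i'} = \beta$ and recall the elementary inequality $\tr A_i' X' \leq \norm{A_i'}\cdot\tr X' = \beta \tr X'$, valid for any $X' \geq 0$ (diagonalize $X' = \sum_j \mu_j \ket{w_j}\bra{w_j}$ with $\mu_j \geq 0$ and bound each $\bra{w_j}A_i'\ket{w_j} \leq \norm{A_i'}$ since $A_i' \geq 0$). Combining this with the feasibility constraint $\tr A_i' X' \geq 1$ gives $\tr X' \geq 1/\beta$ for every feasible $X'$, and taking the infimum over feasible $X'$ yields $\opt(P') \geq 1/\beta$.

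For the upper bound I would construct $X'$ as follows: for each $i \in [m]$ choose a unit eigenvector $\ket{v_i}$ of $A_i'$ with eigenvalue $\norm{A_i'}$ (this exists and is nonzero because $A_i' \neq 0$, else the $i$-th constraint of $P$ would be infeasible) and set $X_i' \defeq \frac{1}{\norm{A_i'}}\ket{v_i}\bra{v_i} \geq 0$, so that $\tr A_i' X_i' = 1$ and $\tr X_i' = 1/\norm{A_i'} \leq 1/\beta$. Then let $X' \defeq \sum_{i=1}^m X_i'$. Since each $A_j' \geq 0$ and each $X_i' \geq 0$, all terms $\tr A_j' X_i'$ are non-negative, so $\tr A_j' X' \geq \tr A_j' X_j' = 1$ for every $j$; hence $X'$ is feasible for $P'$, and $\tr X' = \sum_{i=1}^m 1/\norm{A_i'} \leq m/\beta$, giving $\opt(P') \leq m/\beta$.

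I do not expect any real obstacle here; the argument is elementary. The only points requiring a word of care are that each $A_i'$ is nonzero (which holds whenever $P$ is feasible, an assumption already in force) so that $\norm{A_i'} > 0$ and the top eigenvector is well defined, and the standard fact that for positive semidefinite $A, X$ one has $\tr AX \leq \norm{A}\,\tr X$, which is what makes the lower-bound step go through.
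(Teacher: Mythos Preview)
Your proof is correct. The lower bound is identical to the paper's argument. For the upper bound, the paper takes a slightly different and shorter route: it simply observes that $X' = \tfrac{1}{\beta}\I$ is feasible (since $\tr A_i' X' = \tfrac{1}{\beta}\tr A_i' \geq \tfrac{1}{\beta}\norm{A_i'} \geq 1$) and has $\tr X' = n/\beta \leq m/\beta$, invoking the standing assumption $m \geq n$ made earlier in the appendix. Your construction via a sum of rank-one top-eigenvector projections also works and has the minor advantage of not needing $m \geq n$, at the cost of a slightly longer argument.
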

\begin{proof}
Note that $\frac{1}{\beta} \I$ is a feasible solution for $P'$. This implies $\opt(P') \leq \frac{n}{\beta} \leq \frac{m}{\beta}$. Let $X'$ be an optimal feasible solution for $P'$.  Let $j$ be such that $\|A'_j\| = \beta$. Then $\beta \tr X' \geq \tr A'_j X' \geq 1$, hence $\frac{1}{\beta}\leq \opt(P')$.
\end{proof}
Let $A_i'=\sum_{j=1}^n a_{ij}'|v_{ij}\rangle\langle v_{ij}|$ be the
spectral decomposition of $A_i'$. Define for all $i \in [m]$ and $j \in [n]$,
\begin{equation}\label{eqn:aij}
a_{ij}^{''}\defeq\begin{cases}\frac{\beta m}{\ve} &\text{if
$a_{ij}'>\frac{\beta m}{\ve}$,}\\ 0 &\text{if
$a_{ij}'<\frac{\ve\beta}{m},$}\\ a_{ij}'
&\text{otherwise.}\end{cases}
\end{equation}
Define $A_i^{''}=\sum_{j=1}^n a_{ij}^{''}|v_{ij}\rangle\langle v_{ij}|.$
Consider the transformed Primal problem $P^{''}$.

\begin{center}
  \begin{minipage}{2in}
  \centerline{\underline{Transformed Primal problem $P^{''}$}}\vspace{-4mm}
       \begin{align*}
      \text{minimize:}\quad & \tr X^{''}\\
      \text{subject to:}\quad & \forall i\in  [m] : \tr A_i^{''} X^{''} \geq 1, \\
      & X^{''} \geq 0.
    \end{align*}
  \end{minipage}
\end{center}

\begin{lemma}\label{lem:specialform}
\begin{enumerate}
\item Any feasible solution to $P^{''}$ is also a feasible solution to
$P'$.

\item $\opt(P') \leq \opt(P^{''}) \leq \opt(P')(1+\ve)$.
\end{enumerate}

\end{lemma}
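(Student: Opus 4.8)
# Proof Plan for Lemma~\ref{lem:specialform}

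The plan is to prove the two parts by comparing the three programs $P'$ (normalized), $P''$ (transformed), and the relationship of their feasible regions and optimal values. The key observation underlying everything is that the truncation in Eq.~\eqref{eqn:aij} replaces each eigenvalue $a'_{ij}$ either by something no larger (the cap at $\frac{\beta m}{\ve}$ and the floor to $0$) — so that $A''_i \leq A'_i$ entrywise in the shared eigenbasis, hence $\tr A''_i X \leq \tr A'_i X$ for all $X \geq 0$. This immediately gives part~1: any $X''$ feasible for $P''$ satisfies $\tr A'_i X'' \geq \tr A''_i X'' \geq 1$, so it is feasible for $P'$. The same containment of feasible regions gives the lower bound $\opt(P') \leq \opt(P'')$ in part~2, since $P'$ is a minimization over a larger feasible set.

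For the upper bound $\opt(P'') \leq (1+\ve)\opt(P')$ in part~2, I would take an optimal solution $X'$ for $P'$ and exhibit a feasible solution for $P''$ of objective value at most $(1+\ve)\,\opt(P') = (1+\ve)\tr X'$. The natural candidate is $X'' \defeq (1+\ve) X' + (\text{small correction})$, or more carefully one should analyze how much the constraint $\tr A''_i X' \geq 1$ can fail. The point is to bound the deficit $\tr A'_i X' - \tr A''_i X'$ coming from the two truncation operations. The floor-to-zero step discards eigendirections with $a'_{ij} < \frac{\ve\beta}{m}$; summed over at most $n \leq m$ directions this removes at most $\frac{\ve\beta}{m}\cdot m \cdot \|X'\| = \ve\beta\|X'\|$ from $\tr A'_i X'$ — but we need this relative to the constraint value $1$, so we must use $\tr X' = \opt(P') \leq \frac{m}{\beta}$ from Claim~\ref{claim:opt} to see $\beta \|X'\| \leq \beta \tr X' \leq m$, which is too weak by itself; instead one should compare against $\opt(P') \geq \frac 1\beta$ (also Claim~\ref{claim:opt}), giving that the dropped amount is at most $\ve$ times something controlled by $\opt(P')$. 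The cap step only decreases $A'_i$, so it cannot hurt feasibility of the \emph{lower-bound} constraints at all; it is only the floor step that matters.

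Concretely, I expect the argument to run: let $X'$ be optimal for $P'$ and set $X'' = (1+c\ve) X'$ for a suitable constant $c$; then $\tr A''_i X'' = (1+c\ve)\tr A''_i X' \geq (1+c\ve)(\tr A'_i X' - \ve\beta \|X'\|_{\mathrm{relevant}})$, and using $\tr A'_i X' \geq 1$ together with the bound relating $\beta\|X'\|$ to $\opt(P')$ obtained from Claim~\ref{claim:opt}, one checks $(1+c\ve)(1 - O(\ve)) \geq 1$ for $\ve$ small, so $X''$ is feasible for $P''$ with $\tr X'' = (1+c\ve)\tr X' \leq (1+\ve)\opt(P')$ after absorbing constants (or one tracks the constants so the final bound is exactly $(1+\ve)$). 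The main obstacle — and the step needing genuine care rather than routine manipulation — is making the estimate of the floor-truncation deficit \emph{relative}: one must argue the total eigenvalue mass removed from each $A'_i$ is small compared to the constraint threshold $1$, and this is precisely where the two-sided bound $\frac 1\beta \leq \opt(P') \leq \frac m\beta$ and the choice $\gamma = m^2/\ve^2$ (equivalently the thresholds $\frac{\ve\beta}{m}$ and $\frac{\beta m}{\ve}$) are calibrated to make the numbers work. The dual transformation is analogous and, as the paper notes, I would only remark that it is symmetric.
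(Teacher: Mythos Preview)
Your treatment of part~1 and of the lower bound $\opt(P')\leq\opt(P'')$ is correct and matches the paper: both follow immediately from $A''_i\leq A'_i$.

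The gap is in your upper bound argument. Your claim that ``the cap step only decreases $A'_i$, so it cannot hurt feasibility of the lower-bound constraints at all; it is only the floor step that matters'' has the logic backwards. Decreasing $A'_i$ makes the constraint $\tr A''_i X\geq 1$ \emph{harder}, not easier, and the cap step can hurt by an unbounded amount. Concretely, if $A'_i$ has a very large eigenvalue $M\gg \beta m/\ve$ in direction $|v\rangle$ and the optimal $X'$ satisfies the $i$-th constraint solely through $\langle v|X'|v\rangle=1/M$, then after capping $\tr A''_i X'=\frac{\beta m}{\ve M}$, which can be arbitrarily close to $0$. No multiplicative scaling $X''=(1+c\ve)X'$ can repair this. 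So your proposed candidate simply need not be feasible for $P''$.

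The paper's fix is to use an \emph{additive} correction: set $X''=X'+\frac{\ve\tau}{m}\I$ where $\tau=\tr X'=\opt(P')$. Since $m\geq n$ this costs only $\tr X''\leq(1+\ve)\tau$. Feasibility is then argued by a case split on $i$. If some eigenvalue of $A'_i$ was capped, then $A''_i$ has an eigenvalue equal to $\beta m/\ve$, and the identity part alone gives $\tr A''_i X''\geq \frac{\beta m}{\ve}\cdot\frac{\ve\tau}{m}=\beta\tau\geq 1$ by Claim~\ref{claim:opt}. If no eigenvalue was capped, then only the floor step is active, $A''_i\geq A'_i-\frac{\ve\beta}{m}\I$, and (using $\|A''_i\|=\|A'_i\|\geq\beta$) the identity part contributes at least $\beta\cdot\frac{\ve\tau}{m}$, exactly cancelling the loss $\frac{\ve\beta}{m}\tr X'$ from flooring. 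Your instinct that the thresholds and Claim~\ref{claim:opt} are calibrated to make the numbers work is right, but the mechanism is this additive identity term, not a multiplicative rescaling.
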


\begin{proof}
\begin{enumerate}
\item Follows immediately from the fact that $A_i^{''}\leq A_i'$.

\item First inequality follows from 1. Let $X'$ be an optimal solution to $P'$ and let $\tau=\tr(X')$. Let
$X^{''}=X'+\frac{\ve\tau}{m}\I$. Then, since $m \geq n$, $\tr X^{''}\leq(1+\ve)\tr X'$.
Thus it suffices to show that $X^{''}$ is feasible to $P^{''}$.

Fix $i \in [m]$. Assume that there exists $j \in [n]$ such that $a_{ij}'\geq\frac{\beta m}{\ve}$. Then, from Claim~\ref{claim:opt}
$$\tr A_i^{''}X_i^{''} \geq\tr \frac{\beta m}{\ve}|v_{ij}\rangle\langle v_{ij}|\cdot \frac{\ve\tau}{m}\I =\beta\tau\geq 1.$$

Now assume that for all $j \in [n]$, $a_{ij}\leq\frac{\beta m}{\ve}$. By
(\ref{eqn:aij}) and definition of $\beta$, $\|A_i''\| = \|A_i'\| \geq \beta$ and $A_i'' \geq A_i' - \frac{\ve \beta}{m} \I$. Therefore
\begin{align*}
\tr A_i^{''}X_i^{''} &\geq \tr A_i''X' +\beta\frac{\ve\tau}{m} \\
& \geq \tr A_i'X' +\beta\frac{\ve\tau}{m}-\tr \frac{\ve\beta}{m}X' =\tr A_i'X' \geq 1.
\end{align*}
\end{enumerate}
\end{proof}

Note that for all $i \in [m]$, the ratio between the largest eigenvalue and
the smallest nonzero eigenvalue of $A_i^{''}$ is at most $\frac{m^2}{\ve^2}=\gamma$.

Finally, we get the special form Primal problem $\hat{P}$ as follows. Let $t = \max_{i\in [m]}\|A^{''}_i\|$ and for all $i \in [m]$ define $\hat{A}_i \defeq \frac{A_i^{''}}{t}$. Consider,

\begin{center}
  \begin{minipage}{2in}
  \centerline{\underline{Special form Primal problem $\hat{P}$}}\vspace{-4mm}
       \begin{align*}
      \text{minimize:}\quad & \tr \hat{X}\\
      \text{subject to:}\quad & \forall i\in  [m] : \tr \hat{A}_i \hat{X} \geq 1, \\
      & \hat{X} \geq 0.
    \end{align*}
  \end{minipage}
\end{center}
It is easily seen that there is a one-to-one correspondence between the
feasible solutions to $P^{''}$ and $\hat{P}$ and $\opt(\hat{P}) = t \cdot \opt(P'')$. Therefore $\hat{P}$ satisfies all the properties that we want and
cumulating all we have shown above, we get following conclusion.
\begin{lemma}\label{lem:specialformfinal}
Let $\hat{X}$ be a feasible solution to $\hat{P}$ such that $\tr \hat{X} \leq (1+\ve) \opt(\hat{P})$. A feasible solution $X$
to $P$ can be derived from $\hat{X}$ such that $\tr X \leq (1+\ve)^2 \opt(P)$.
\end{lemma}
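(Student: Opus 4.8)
The plan is to chain together the three normalization steps already established in this appendix and track how the approximation factor accumulates. Concretely, I would proceed backwards along the chain $P \rightsquigarrow P' \rightsquigarrow P'' \rightsquigarrow \hat P$: start from a feasible $\hat X$ for $\hat P$ with $\tr \hat X \le (1+\ve)\opt(\hat P)$, lift it to a feasible $X''$ for $P''$, then to a feasible $X'$ for $P'$, and finally to a feasible $X$ for $P$, multiplying the guarantee by the corresponding factor at each step.

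First I would use the one-to-one correspondence between feasible solutions of $\hat P$ and $P''$ together with $\opt(\hat P) = t\cdot\opt(P'')$ (the scaling $\hat A_i = A_i''/t$): the solution $X'' = t\,\hat X$ is feasible for $P''$ and satisfies $\tr X'' = t\,\tr\hat X \le t(1+\ve)\opt(\hat P) = (1+\ve)\opt(P'')$. Next, by part~1 of Lemma~\ref{lem:specialform}, $X''$ is already feasible for $P'$, and by part~2, $\opt(P'')\le (1+\ve)\opt(P')$, so $X' \defeq X''$ is feasible for $P'$ with $\tr X' \le (1+\ve)^2\opt(P')$. Finally, applying the claim relating $P$ and $P'$, the matrix $X \defeq C^{-1/2}X'C^{-1/2}$ is feasible for $P$ with the same objective value $\tr X = \tr X' \le (1+\ve)^2\opt(P')=(1+\ve)^2\opt(P)$, using $\opt(P)=\opt(P')$.

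I would then remark that the preliminary reductions at the top of the appendix — deleting constraints with $b_i = 0$ and dual variables $y_i$ whose $A_i$ is not supported on $\operatorname{supp}(C)$, and restricting to $\operatorname{supp}(C)$ so that $C$ is invertible — do not change $\opt(P)$ and only add a multiplicative factor of $1$, so they can be folded in for free; likewise the duplication of the first constraint to ensure $m \ge n$ is objective-preserving. Hence the only losses are the $(1+\ve)$ from Lemma~\ref{lem:specialform}(2) and the (vacuous, factor-$1$) passage through the invertible-$C$ normalization, giving the claimed $(1+\ve)^2$.

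This lemma is essentially bookkeeping, so there is no real obstacle; the only point requiring minor care is making sure each transformation is applied in the correct direction — the feasibility implications in Lemma~\ref{lem:specialform} and the claim go "the easy way" (a feasible solution of the more constrained problem is feasible for the less constrained one, and the optimal-value inequalities point the right way), so the guarantee on $\tr\hat X$ indeed propagates to a guarantee on $\tr X$ rather than the reverse. I would also note in passing that the analogous chain applied to the dual yields a matching feasible $Y$ for $D$, which is what the main algorithm's output actually uses.
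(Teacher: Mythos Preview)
Your approach is exactly what the paper intends: the paper simply records that the lemma follows by ``cumulating all we have shown above,'' and your backward chain $\hat P \to P'' \to P' \to P$ spells this out correctly.

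There is one arithmetical slip in the first step. Because $\hat A_i = A_i''/t$, feasibility of $\hat X$ for $\hat P$ means $\tr A_i''(\hat X/t) = \tr \hat A_i \hat X \ge 1$, so the correct lift is $X'' \defeq \hat X/t$, not $X'' = t\hat X$; with your formula the displayed equality $t(1+\ve)\opt(\hat P) = (1+\ve)\opt(P'')$ would actually read $t^2(1+\ve)\opt(P'')$. After dividing by $t$ instead, $\tr X'' = (\tr \hat X)/t \le (1+\ve)\opt(\hat P)/t = (1+\ve)\opt(P'')$ and the remainder of your chain goes through verbatim. (A minor notational point: the objective in $P$ is $\tr C X$, so ``$\tr X = \tr X'$'' should be $\tr C X = \tr X'$; the paper's own statement of the lemma uses the same shorthand.)
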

Furthermore we claim, without giving further details, that $X$ can be obtained from $\hat{X}$ in time $\polylog(n)\cdot \polylog(m)$.

\section{Deferred Proofs}
\label{sec:deferredproofs}
\begin{proofof}{Lemma~\ref{lem:2by2matrix}}
Let $\eta$ be the maximum real number such that
$P-\eta(\I-\Pi^{P+Q}) \geq 0$. Set $P_1=P-\eta(\I-\Pi^{P+Q})$ and
$Q_1=Q+\eta(\I-\Pi^{P+Q})$. $P_1,Q_1$ satisfy all the conditions in
this Lemma and $P_1$ is a rank one matrix. Furthermore, set
$P_2=P_1/\|Q_1\|$ and $Q_2=Q_1/\|Q_1\|$. Again all the conditions in this Lemma
are still satisfied by $P_2, Q_2$ since $\Pi^{Q_2}=\Pi^{Q_1}=\Pi^Q$
and $\Pi^{P_2+Q_2}=\Pi^{P_1+Q_1}=\Pi^{P+Q}$. As
$\lambda_2(P_2+Q_2)\leq\lambda_2(P_1+Q_1) = \lambda_2(P+Q),$ it suffices to prove that $\lambda_2(P_2+Q_2)>1-\frac{1}{9}\ve_1^3$. Consider $P_2, Q_2$ in the diagonal bases of $Q_2$.
\[P_2=
   \left(\begin{array}{cc}
     |r|\cos^2\theta & r\sin\theta\cos\theta\\
     r^*\sin\theta\cos\theta & |r|\sin^2\theta
     \end{array}\right),
Q_2=
   \left(\begin{array}{cc}
     1 & 0\\
     0 & b
     \end{array}\right).
\]
where $r \in \complex$ and $0\leq b<1$. Set $\lambda=\|P_2+Q_2\|$. Eq.~\eqref{eqn:2by2bdoesnotdecrese} implies that
\begin{equation}\label{eqn:upbforlambda}
\lambda\leq\frac{1}{1-\ve_1^8}<1+2\ve_1^8.
\end{equation}
Since
\begin{eqnarray*}
\tr \Pi^{Q_2}P_2 &=&\tr \Pi^{Q_2}(P_2+Q_2) -\tr \Pi^{Q_2}Q_2 \leq
\tr \Pi^{P_2+Q_2}(P_2+Q_2) -\tr \Pi^{Q_2}Q_2 \\
&\leq&\ve_1^8\tr \Pi^{P_2+Q_2}(P_2+Q_2) =\ve_1^8\lambda<2\ve_1^8,
\end{eqnarray*}
we have,
\begin{equation}\label{eqn:rcos}
|r|\cos^2\theta<2\ve_1^8.
\end{equation}
Observe that,
$$|v\rangle=    \frac{1}{\sqrt{1+\left(\frac{|r|\sin\theta\cos\theta}{\lambda-b-|r|\sin^2\theta}\right)^2}} \ 
 \left(\begin{array}{c}
     1 \\
     \frac{r^*\sin\theta\cos\theta}{\lambda-b-|r|\sin^2\theta}
     \end{array}\right),$$
is the eigenvector of $P_2 + Q_2$ with eigenvalue $\lambda$. Hence $\Pi^{P_2+Q_2}=|v\rangle\langle v|$. Note that~$\lambda >
b+|r|\sin^2\theta$, because $\lambda_2(P_2+Q_2)=1+|r|+b-\lambda<1.$ Consider 
\begin{eqnarray*}
\tr(\Pi^{P_2+Q_2}P_2)&=&\langle v|P_2|v\rangle\\
&=&\frac{|r|\cos^2\theta+ \frac{2 |r|^2\sin^2\theta\cos^2\theta}{\lambda-b-|r|\sin^2\theta}+ \frac{|r|^3\sin^4\theta\cos^2\theta}{(\lambda-b-  |r|\sin^2\theta)^2}} 
{1+\frac{|r|^2\sin^2\theta\cos^2\theta}{(\lambda-b-|r|\sin^2\theta)^2}}\\
&=&\frac{|r|\cos^2\theta(\lambda-b-|r|\sin^2\theta)^2+2|r|^2\sin^2\theta\cos^2\theta(\lambda-b-|r|\sin^2\theta)+|r|^3\sin^4\theta\cos^2\theta}{(\lambda-b-|r|\sin^2\theta)^2+|r|^2\sin^2\theta\cos^2\theta}\\
&=&\frac{|r|(\lambda-b)^2\cos^2\theta}{(\lambda-b-|r|\sin^2\theta)^2+|r|^2\sin^2\theta\cos^2\theta}\\
&\leq&\frac{|r|\cos^2\theta}{(1-\frac{|r|\sin^2\theta}{\lambda-b})^2} \\
&<&\frac{2\ve_1^8}{(1-\frac{|r|\sin^2\theta}{\lambda-b})^2}.
\end{eqnarray*}
Combining with (\ref{eqn:2by2alarge}), we obtain
\begin{align*}
\lefteqn{2\ve_1^8\geq\ve^2(1-\frac{|r|\sin^2\theta}{\lambda-b})^2} \\
& \Rightarrow(1-\frac{|r|\sin^2\theta}{\lambda-b})^2<\frac{\ve_1^6}{100} \\
& \Rightarrow |r|\sin^2\theta>(1-\frac{1}{10}\ve_1^3)(\lambda-b)\\
&\Rightarrow |r|\sin^2\theta+(1-\frac{1}{10}\ve_1^3)b>(1-\frac{1}{10}\ve_1^3)\lambda \\
& \Rightarrow |r|+b>(1-\frac{1}{10}\ve_1^3)\lambda>1-\frac{1}{10}\ve_1^3 \enspace .
\end{align*}
Hence
\begin{align*}
\lambda_2(P_2+Q_2) & =\tr(P_2+Q_2)-\lambda=1+|r|+b-\lambda \\
& >2-\frac{1}{10}\ve_1^3-(1+2\ve_1^8)>1-\frac{1}{9}\ve_1^3 \ .
\end{align*}
\end{proofof}

\end{document}